\documentclass{article}

\usepackage[a4paper, margin=1.2in]{geometry}
\usepackage{graphicx} 
\usepackage[utf8]{inputenc} 
\usepackage[T1]{fontenc}    
\usepackage{hyperref}       
\usepackage{url}            
\usepackage{booktabs}       
\usepackage{amsfonts}       
\usepackage{nicefrac}       
\usepackage{microtype}      
\usepackage{xcolor}         
\usepackage{subcaption}
\usepackage{nicematrix}
\usepackage{algorithm}
\usepackage{algorithmic}

\usepackage{float}
\usepackage{amssymb}
\usepackage{amsmath}
\usepackage{dsfont}
\usepackage{amsthm}
\usepackage{bm}
\usepackage{mathtools}
\usepackage{xcolor}
\usepackage{xspace}
\usepackage[toc]{appendix}
\newtheorem{assumption}{Assumption}
\newtheorem{definition}{Definition}
\newtheorem{lemma}{Lemma}
\newtheorem{theorem}{Theorem}

\usepackage{paralist}
\usepackage{bbm}

\title{Dynamic Regret in Time-varying MDPs with Intermittent Information}

\author{
Negin Musavi$^{1}$ and Melkior Ornik$^{1}$\\
{\small $^{1}$Department of Aerospace Engineering and the Coordinated Science Laboratory,}\\
{\small University of Illinois Urbana--Champaign, Champaign, IL 61820, USA} \\
{\tt\small nmusavi2@illinois.edu, mornik@illinois.edu}
}

\date{}

\begin{document}

\maketitle

\section*{Abstract}\label{sec:abs}

We study sequential decision-making in time-varying Markov decision processes (TVMDPs) under limited update rates, where the decision-maker observes the system and updates its model only intermittently. Such settings arise in applications with sensing, communication, or computational constraints that preclude continuous adaptation. Our goal is to understand how the performance of an agent, which learns and plans using receding-horizon control under these information constraints, degrades as a function of the update rate. We propose a skip-update learning and planning framework that combines likelihood-based estimation of time-varying transition kernels with finite-horizon planning and executes policies between updates using stale information. We analyze its performance via dynamic regret relative to an oracle policy with full knowledge of the dynamics and continuous observations. Our main result establishes a dynamic regret bound that explicitly quantifies the impact of intermittent updates, decomposing regret into contributions from update times and skip intervals and revealing its dependence on temporal variation, estimation uncertainty, and the duration of intervals without updates. In particular, the dominant contribution from skip intervals admits a linear dependence on the interval length and the rate of temporal variation, while its effect is mitigated by mixing-induced contraction.
\section{Introduction}\label{sec:intro}

We consider sequential decision-making problems in which the system dynamics are time-varying and the available information and computational resources are limited in rate. In many real-world applications, sensing, communication, and computation constraints prevent continuous access to state information and continuous model updates. As a result, the decision-maker receives observations only intermittently and must operate in between using stale or incomplete information.

\medskip
\noindent
Such constraints arise naturally in networked systems with bandwidth limitations, robotic platforms with restricted sensing or energy budgets, and large-scale data-driven systems where frequent updates are computationally impractical. In these settings, both information acquisition and computation are resource-constrained, forcing the decision-maker to operate for extended periods without new data or policy updates.

\medskip
\noindent
To study this problem, we model sequential decision-making in time-varying Markov decision processes (TVMDPs), where the transition dynamics evolve over time and the decision-maker receives observations and performs updates only at selected times. Between these updates, the agent must act using previously acquired information and previously computed policies. A central question in this setting is:

{\centering
\emph{How does performance degrade when the decision-maker is forced to operate for extended periods without new information or updates?}
\par}

\medskip
\noindent
During these intervals, two sources of error arise: (i) \emph{model mismatch}, due to temporal variation in the dynamics, and (ii) \emph{state and policy mismatch}, due to acting based on outdated information and outdated computations. Understanding how these errors accumulate over time, and how they depend on the rate of information acquisition and computation, is the main focus of this paper.

\medskip
\noindent
Reinforcement learning and decision-making in Markov decision processes (MDPs) have been extensively studied in stationary environments, where the transition dynamics are fixed over time. A large body of work focuses on learning optimal policies under unknown but stationary dynamics, with theoretical guarantees based on regret or sample complexity \cite{kearns2002near, brafman2002r, kolter2009near, ouyang2017learning, ji2023regret}. These works provide algorithms that balance exploration and exploitation, but rely fundamentally on the assumption that the environment does not change over time.

\medskip
\noindent
To address more realistic scenarios, several works extend this framework to non-stationary or time-varying environments. Early formulations introduce time-dependent MDPs and continuous-time models to capture temporal variations in dynamics and rewards \cite{boyan2000exact}. More recent work studies online learning and control in time-varying MDPs, providing dynamic regret guarantees that scale with the variation of the environment \cite{li2019online}. Similarly, adaptive regret and related metrics have been proposed to capture performance in changing dynamical systems \cite{gradu2023adaptive}. While these works account for temporal variation, they typically assume continuous access to observations and the ability to update policies at every time step.

\medskip
\noindent
In parallel, a growing literature considers learning and planning in explicitly time-varying environments. Time-varying MDP formulations and solution methods have been developed for applications such as robotics and navigation under spatiotemporal disturbances \cite{liu2018solution}. Learning-based approaches incorporate temporal variation through maximum likelihood estimation with drift constraints or uncertainty quantification \cite{ornik2021learning}, as well as extensions to partially observable settings \cite{puthumanaillam2024weathering}. In robotics and control, time-varying dynamics are also addressed using model predictive control, meta-reinforcement learning, and robust optimization techniques \cite{duckworth2021time, jiang2021attention, zouitine2024time}. 

Theoretical performance guarantees for model predictive schemes in nonstationary MDPs have also been established, including dynamic regret bounds under the assumption that future transition kernels are known or can be accurately predicted \cite{zhang2024predictive}. However, these approaches generally rely on frequent updates or continuous access to state information.

\medskip
\noindent
More recently, there has been interest in understanding the role of update frequency and computational constraints in reinforcement learning. For example, \cite{lee2024pausing} shows that continuously updating policies is not always optimal in non-stationary environments, and that introducing pauses in updates can improve performance. However, such works do not explicitly characterize how performance degrades as a function of the information or update rate.

\medskip
\noindent
\paragraph{Our Contribution.}
In contrast to existing literature, we study decision-making in time-varying MDPs under a limited rate of observations and updates. We explicitly model both information and computation constraints and characterize how performance degrades when the decision-maker operates for extended periods without new data or recomputation.

\medskip
\noindent
To this end, we propose a skip-update learning and decision-making framework for TVMDPs with limited observation and update rates. The key idea is to perform model estimation and policy updates only at selected times, and to reuse the resulting policy between updates. At each update time, the decision-maker constructs an estimate of the time-varying transition kernels using constrained maximum likelihood estimation that incorporates known drift bounds \cite{ornik2021learning}. It then solves a finite-horizon planning problem based on the estimated model and applies the resulting policy until the next update.

\medskip
\noindent
This leads to a piecewise-constant decision-making strategy that isolates how errors accumulate during intervals without updates and how this accumulation interacts with temporal variation in the dynamics.

\medskip
\noindent
From a theoretical perspective, we analyze the performance of the proposed algorithm through a dynamic regret criterion, comparing it to an oracle policy with full knowledge of the time-varying dynamics and continuous updates. Our analysis decomposes the regret into contributions arising from temporal variation, estimation and planning errors, finite-horizon effects, and intervals without new information and updates.

\medskip
\noindent
The resulting regret bound explicitly characterizes how the limited rate of information and updates affects performance in time-varying environments. In particular, it quantifies how errors accumulate during skip intervals and provides insight into the trade-off between resource constraints and decision-making performance.

\medskip
\noindent
The remainder of the paper is organized as follows. In Section~\ref{sec:prob}, we introduce the problem setting and the proposed learning and planning framework. Section~\ref{sec:regret} presents the dynamic regret analysis and the main theoretical results. Additional technical details and proofs are provided in the appendix.

\paragraph{Notation.}
We denote by $\mathbb{N}$ the set of nonnegative integers and by $\mathbb{R}$ the set of real numbers. 
\noindent
For a finite set $\mathcal{X}$, let $\Delta(\mathcal{X})$ denote the probability simplex over $\mathcal{X}$. For $\mu \in \Delta(\mathcal{X})$ and $f : \mathcal{X} \to \mathbb{R}$, define $\mathbb{E}_{x \sim \mu}[f(x)] := \sum_{x \in \mathcal{X}} \mu(x) f(x).$

\noindent
A probability transition kernel on $\mathcal{S}$ is a mapping $P : \mathcal{S} \to \Delta(\mathcal{S})$, where $P(\cdot \mid s)$ denotes the distribution of the next state given the current state $s$. A controlled transition kernel is a mapping $P : \mathcal{S} \times \mathcal{A} \to \Delta(\mathcal{S})$, where $P(\cdot \mid s,a)$ denotes the distribution of the next state given state $s$ and action $a$. We write $P_t(\cdot \mid s,a)$ for the transition kernel at time $t$. For a controlled transition kernel $P$ and $f : \mathcal{S} \to \mathbb{R}$, define $(Pf)(s,a) := \mathbb{E}_{s' \sim P(\cdot \mid s,a)}[f(s')].$

\noindent
For integers $k < k'$ and an action sequence $(a_k,\dots,a_{k'})$, let $P^{a_k \cdots a_{k'}}_{k:k'}(\cdot \mid s)$ denote the distribution of the state at time $k'$ starting from $s$ at time $k$. Correspondingly, $(P^{a_k \cdots a_{k'}}_{k:k'} f)(s) := \mathbb{E}[f(s_{k'})]$ under this evolution.

\noindent
For a policy $\pi : \mathcal{S} \to \Delta(\mathcal{A})$, the induced transition kernel is $P^\pi(\cdot \mid s) := \sum_{a \in \mathcal{A}} \pi(a \mid s) P(\cdot \mid s,a).$ A (time-varying) policy sequence is a collection $\{\pi_t\}_{t=0}^{T-1}$ with $\pi_t : \mathcal{S} \to \Delta(\mathcal{A})$.

\noindent
For $f : \mathcal{X} \to \mathbb{R}$, define the span seminorm $\mathrm{sp}(f) := \max_{x \in \mathcal{X}} f(x) - \min_{x \in \mathcal{X}} f(x),$ and for $\mu,\nu \in \Delta(\mathcal{X})$, the total variation distance $\|\mu - \nu\|_{\mathrm{tv}} := \tfrac{1}{2} \sum_{x \in \mathcal{X}} |\mu(x) - \nu(x)|.$

\noindent
We will repeatedly use the standard bound: for any controlled transition kernels $P,\hat{P}$ and any function $f : \mathcal{S} \to \mathbb{R}$, $\mathrm{sp} \big( Pf - \hat{P}f \big)
    \le
    \max_{(s,a)} \|P(\cdot \mid s,a) - \hat{P}(\cdot \mid s,a)\|_{\mathrm{tv}} \, \mathrm{sp}(f).$
\section{Problem Setting}\label{sec:prob}

We consider a finite-horizon time-varying Markov decision process (TVMDP)
\begin{align*}
    \mathcal{M} = \bigl(\mathcal{S}, \mathcal{A}, T, \{P_t\}_{t=0}^{T-1}, \{r_t\}_{t=0}^{T-1}\bigr),
\end{align*}    
where $\mathcal{S}$ and $\mathcal{A}$ denote finite state and action spaces, respectively, and $T \in \mathbb{N}$ is the horizon length. At each time step $t \in \{0,\ldots,T-1\}$, the transition kernel $P_t : \mathcal{S} \times \mathcal{A} \to \Delta(\mathcal{S})$ specifies a probability distribution $P_t(\cdot \mid s,a)$ over next states given the current state--action pair $(s,a)$, and the reward function $r_t : \mathcal{S} \times \mathcal{A} \to \mathbb{R}$ assigns an instantaneous reward $r_t(s,a)$. We assume that the reward function is known at any time $t$ , whereas the transition kernels $\{P_t\}_{t=0}^{T-1}$ are unknown and vary over time. We further assume that the dynamics evolve gradually and satisfy a bounded drift condition, known to the agent, for any $s,s' \in \mathcal{S}$, $a \in \mathcal{A}$, and $t \ge 0$:
\begin{equation}\label{eq:drift}
    \bigl| P_{t+1}(s' \mid s,a) - P_t(s' \mid s,a) \bigr|
    \le \varepsilon_t,
\end{equation}
where $\{\varepsilon_t\}_{t \ge 0}$ encodes prior knowledge on the rate of temporal variation of the dynamics. 

\medskip
\noindent
We assume that the agent does not receive new information at every time step. Instead, information is revealed only at a subset of time indices, which we refer to as update times:
\[
    \mathcal{T}_{\mathrm{upd}}
    =
    \{\tau_0,\tau_1,\ldots,\tau_{N_T}\},
    \quad
    0 \le \tau_0 < \tau_1 < \cdots < \tau_{N_T} < T.
\]
We assume $\tau_0 = 0$, so that the initial state is available to the agent. 
We define the set of skip times as
\[
    \mathcal{T}_{\mathrm{skip}}
    :=
    \{0,\ldots,T-1\} \setminus \mathcal{T}_{\mathrm{upd}}.
\]

\medskip
\noindent
At each update time $t \in \mathcal{T}_{\mathrm{upd}}$, the agent observes the current state $s_t$ and, after applying an action $a_t$, also observes the subsequent state $s_{t+1}$. Thus, update times correspond to time steps at which new information becomes available to the agent. The information collected at update times is used by the agent to refine its decision-making strategy. At skip times, the agent does not receive new information and instead operates based on the information available from the most recent update time. At every time step, regardless of whether an update is performed, the agent selects an action, receives the instantaneous reward $r_t(s_t, a_t)$ and the system evolves according to the underlying transition kernel.

\medskip
\noindent
For any time $t \geq 0$, define the index of most recent update time and the set of update times up to time $t$ as
\[
    k(t) := \max\{k : \tau_k \le t\},\quad \text{and} \quad 
    \mathcal{T}^t_{\mathrm{upd}}
    :=
    \{\tau_0,\tau_1,\ldots,\tau_{k(t)}\},
\]
respectively. The goal of the agent then is to learn a sequence of policies $\{\pi_t\}_{t=0}^{T-1}$ ($\pi_t: \mathcal{S} \to \mathcal{A}$) based on the available information, so as to maximize the expected cumulative reward, i.e.,
\begin{equation}
    \{\pi_t^\star\}_{t=0}^{T-1}
    = \operatorname*{arg\,max}_{\{\pi_t\}_{t=0}^{T-1}}
    \mathbb{E}\!\left[
        \sum_{t=0}^{T-1} r_t\big(s_t, \pi_t(s_t)\big)
    \right],
    \label{eq:objective}
\end{equation}
where the expectation is taken with respect to the sequence of transition kernels $\{P_t\}_{t=0}^{T-1}$. The corresponding optimal value of this objective is denoted by
\begin{equation}
    J_T^\star (s)
    = \max_{\{\pi_t\}_{t=0}^{T-1}}
    \mathbb{E}\!\left[
        \sum_{t=0}^{T-1} r_t\big(s_t, \pi_t(s_t)\big) \mid s_{0} = s
    \right],
    \quad \forall s \in \mathcal{S}.
    \label{eq:opt-benchmark}
\end{equation}

\medskip
\noindent
The optimal value function of the TVMDP $\mathcal{M}$ at time $t$ is defined as
\begin{align*}
    V_t^\star(s)
    :=
    \max_{\{\pi_k\}_{k=t}^{T-1}}
    \mathbb{E}\!\left[
        \sum_{k=t}^{T-1}
        r_k\bigl(s_k, \pi_k(s_k)\bigr)
        \mid
        s_t = s
    \right],
    \quad \forall s \in \mathcal{S},
\end{align*}
with terminal condition $V_T^\star \equiv 0$. The optimal value functions satisfy the Bellman recursion
\[
    V_t^\star(s)
    =
    \max_{a \in \mathcal{A}}
    \Big\{
        r_t(s,a)
        +
        \mathbb{E}_{s' \sim P_t(\cdot \mid s,a)}[V_{t+1}^\star(s')]
    \Big\},
    \qquad t = 0,\dots,T-1.
\]

\medskip
\noindent
The value function $V_t^\star$ characterizes the optimal performance that can be achieved when the transition kernels $\{P_t\}_{t=0}^{T-1}$ are known and the system state is fully observed at every time step, in which case the optimal policy can be computed via dynamic programming using the Bellman recursion above.

\medskip
\noindent
In the setting considered in this work, however, the agent does not have access to the transition kernels and receives state information only at intermittent update times. Consequently, the optimal value function cannot be computed directly, and the decision-maker must instead learn and plan based on limited information.

\medskip
\noindent
We devote the remainder of this section to two steps. First, we develop a learning and planning framework that enables the agent to construct policies under the given information constraints. Second, we introduce a performance metric that quantifies the degradation incurred due to limited update rates, thereby providing a formal way to address the central question of this work.

\subsection{Algorithm Development}

We now describe a learning and planning framework that enables the agent to construct policies under the above information constraints. The proposed approach builds on the framework introduced in~\cite{ornik2021learning} and later extended to time-varying partially observable MDPs in~\cite{puthumanaillam2024weathering}. It combines likelihood-based estimation of time-varying transition kernels with finite-horizon planning.

\medskip
\noindent
The key difference from these prior works is that the agent has limited access to state information and performs estimation and planning only at the update times. Accordingly, the proposed approach consists of two main components: (i) estimation of the transition kernels using data collected at update times, and (ii) planning based on the estimated model to learn policies that are executed between consecutive updates.

\medskip
\noindent
In the remainder of this subsection, we describe these two components in detail and then present the resulting skip-update algorithm.

\subsubsection{Maximum Likelihood-based Estimation}

Let $\mathcal{D}_t$ denote the dataset of observed transitions up to time $t$,
\[
    \mathcal{D}_t
    =
    \{(s_{\tau_k}, a_{\tau_k}, s_{\tau_k+1}) : 0 \le k \le k(t)\},
\]
collected at update times. Given a sequence of transition kernels 
$\{P_{\tau_k}\}_{k=0}^{k(t)}$, the corresponding log-likelihood of the data is
\begin{equation*}
    \ell\big(\{P_{\tau_k}\}_{k=0}^{k(t)}; \mathcal{D}_t\big)
    =
    \sum_{k=0}^{k(t)}
    \log P_{\tau_k}(s_{\tau_k+1} \mid s_{\tau_k},a_{\tau_k}).
\end{equation*}
The maximum likelihood estimation of the transition kernels at time $t$ is defined as any maximizer of the constrained log-likelihood problem
\begin{equation}\label{eq:ccmle}
\begin{aligned}
    \max_{\{\hat{P}_{\tau_k}\}_{k=0}^{k(t)}} \quad
    & \sum_{k=0}^{k(t)}
    \log \hat{P}_{\tau_k}(s_{\tau_k+1} \mid s_{\tau_k},a_{\tau_k}) \\
    \text{s.t.}\quad
    & \sum_{s'\in\mathcal{S}} \hat{P}_{\tau_k}(s' \mid s, a) = 1,
    && \forall s \in \mathcal{S},\; a \in \mathcal{A},\; 0 \le k \le k(t), \\
    & \hat{P}_{\tau_k}(s' \mid s, a) \ge 0,
    && \forall s,s' \in \mathcal{S},\; a \in \mathcal{A},\; 0 \le k \le k(t), \\
    & \big|\hat{P}_{\tau_{k}}(s' \mid s, a)
          - \hat{P}_{\tau_{k-1}}(s' \mid s, a)\big|
    \le  \sum_{m=\tau_{k-1}}^{\tau_{k}-1} \varepsilon_{m},
    && \forall s,s' \in \mathcal{S},\; a \in \mathcal{A},\; 0 < k \leq k(t).
\end{aligned}
\end{equation}

\medskip
\noindent
The problem in~\eqref{eq:ccmle} is convex and Theorem~7 of~\cite{ornik2021learning} shows that the transition probabilities associated
with observed data, i.e., the triples $(s_{\tau_k},a_{\tau_k},s_{\tau_k+1}) \in \mathcal{D}_t$, are uniquely
determined, whereas the remaining components do not affect the objective and are only
constrained by affine feasibility conditions. Consequently, the solution set is a
bounded polyhedron (i.e., a polytope). Let this solution set be denoted by
$\mathcal{P}^{\mathcal{D}_{t}}$.

\medskip
\noindent
Let $\{\hat P_{\tau_k}\}_{k=0}^{k(t)}$ denote any particular solution returned by
\eqref{eq:ccmle}. Then $\mathcal{P}^{\mathcal{D}_{t}}$ is the set of all sequences of probability transition kernels that satisfy
\begin{equation}\label{eq:uncer}
\begin{aligned}
    &\tilde{P}_{\tau_k}(s_{\tau_k+1} \mid s_{\tau_k},a_{\tau_k})
    =
    \hat{P}_{\tau_k}(s_{\tau_k+1} \mid s_{\tau_k},a_{\tau_k}),
    && \forall\, 0 \le k \le k(t),\\
    &\tilde{P}_{\tau_k}(s' \mid s,a)
    \ge 0,
    && \forall s, s' \in \mathcal{S},\; a \in \mathcal{A},\; 0 \le k \le k(t),\\
    &\sum_{s' \in \mathcal{S}} \tilde{P}_{\tau_k}(s' \mid s,a)
    = 1,
    && \forall s \in \mathcal{S},\; a \in \mathcal{A},\; 0 \le k \le k(t),\\
    &\bigl|
    \tilde{P}_{\tau_{k}}(s' \mid s,a)
    -
    \tilde{P}_{\tau_{k-1}}(s' \mid s,a)
    \bigr|
    \le \sum_{m=\tau_{k-1}}^{\tau_{k}-1} \varepsilon_{m},
    && \forall s,s' \in \mathcal{S},\; a \in \mathcal{A},\; 0 < k \leq k(t).
\end{aligned}
\end{equation}
Now, for any state--action pair $(s,a)$, define
\begin{align*}
    \mathcal{U}^{\mathcal{D}_{t}}(s,a)
    :=
    \Bigl\{
    \tilde P_{\tau_{k(t)}}(\cdot\mid s,a)
    \;\Big|\;
    \{\tilde{P}_{\tau_k}\}_{k=0}^{k(t)} \in \mathcal{P}^{\mathcal{D}_{t}}
    \Bigr\}.
\end{align*}
We then quantify the uncertainty in the estimation of the probability transition kernels estimate by
\begin{align*}
    \mathrm{u}_{t}(s,a)
    :=
    \mathrm{diam}\bigl(\mathcal{U}^{\mathcal{D}_{t}}(s,a)\bigr).
\end{align*}
The construction above yields, at each time $t$, an estimated transition kernel together with an uncertainty set derived from $\mathcal{P}^{\mathcal{D}_t}$. 

\medskip
\noindent
Next, we describe how these objects are incorporated into a planning procedure to compute control policies.

\subsubsection{Planning with Estimated Transition Kernels}\label{sec:planning}

Planning is performed exclusively at the update times. Between update times, the agent plans and acts based on the most recently observed state, which serves as its internal state estimate. No model update or replanning is performed between consecutive update times.

\medskip
\noindent
In particular, at each update time $\tau_k$, given the estimated transition kernels $\{\hat{P}_{\tau_j}\}_{j=0}^{k}$ and the uncertainty measure $\mathrm{u}_{\tau_k}$ derived from the solution polytope based on the dataset $\mathcal{D}_{\tau_k}$, the agent solves a finite-horizon control problem. Let $H_k := \min\{\bar{H},\, T-1-\tau_k\}$ be the planning horizon at update time $\tau_{k}$ for some fixed $\bar{H} > 0$. The agent plans over a look-ahead horizon of $H_{k}$ by optimizing the expected cumulative reward predicted by the current model, while holding
the estimated transition kernel fixed over the planning horizon. The resulting policy is then executed until the next update time.

\medskip
\noindent
While the quantity $\mathrm{u}_{\tau_k}(s,a)$ characterizes the uncertainty in the transition dynamics at the current update time, finite-horizon planning requires uncertainty estimates at future times $\tau_k + h$, $h=0,\dots,H_k-1$, before additional data are collected. To this end, we construct a forecasted uncertainty measure $\mathrm{u}_{\tau_k+h \mid \tau_k}(s,a)$ using only
$\mathcal{D}_{\tau_k}$ by propagating the solution polytope of~\eqref{eq:ccmle} forward in time under the bounded drift model.

\medskip
\noindent
At update time $\tau_k$, the agent computes a control policy
$\pi^{\mathrm{alg}}_{\tau_k}$ by solving
\begin{align}\label{eq:policy-opt-finite}
    \pi^{\mathrm{alg}}_{\tau_k}
    \in
    \operatorname*{arg\,max}_{\{\pi_{h}\}_{h=0}^{H_k-1}}
    \mathbb{E}\!\left[
    \sum_{h=0}^{H_k-1}
    r^{(\beta)}_{\tau_k,h}\big(x_{h}, \pi_{h}(x_{h})\big)
    \right],
\end{align}
where the augmented reward is defined as
\[
    r^{(\beta)}_{\tau_k,h}(x,a)
    :=
    r_{\tau_k+h}(x,a)
    +
    \beta\, \mathrm{u}_{\tau_k+h\mid \tau_k}(x,a),
\]
with $\beta \ge 0$ controlling the weight placed on uncertainty. The state evolution over the planning horizon follows
\[
    x_{h+1}
    \sim
    \hat{P}_{\tau_k}\bigl(\cdot \mid x_h, \pi_h(x_h)\bigr),
\]
i.e., the estimated transition kernel at the most recent update time is held fixed throughout the planning horizon, consistent with the bounded-drift assumption and the absence of updates within the planning window.

\medskip
\noindent
Between two consecutive update times (i.e., during skip times), the agent does not update the model or replan. Instead, the policy computed at $\tau_k$ is applied in a receding-horizon manner for all times
\[
    t \in \{\tau_k,\tau_k+1,\ldots,\tau_{k+1}-1\}.
\]
This results in a skip-update model-predictive control strategy in which the implemented policy is piecewise constant between update times.

\medskip
\noindent
For the purpose of analysis, we can interpret the planning step
in~\eqref{eq:policy-opt-finite} as solving a finite-horizon optimal control problem for an auxiliary TVMDP with stationary transition kernel $\hat{P}_{\tau_k}$ and non-stationary rewards. By standard dynamic programming arguments, there exists an associated sequence of optimal value functions $\{\hat{W}^\star_{\tau_k,h}\}_{h=0}^{H_k}$ satisfying
\begin{equation}\label{eq:w_hat}
    \hat{W}^{\star}_{\tau_k,h}(s)
    :=
    \max_{\{\pi_{j}\}_{j=h}^{H_k-1}}
    \mathbb{E}\!\left[
    \sum_{j=h}^{H_k-1}
    r^{(\beta)}_{\tau_k,j}\!\big(x_j, \pi_j(x_j)\big)
    \;\Big|\; x_h = s
    \right],
\end{equation}
with terminal condition $\hat{W}^\star_{\tau_k,H_k} \equiv 0$.

\medskip
\noindent
By combining these two components, we obtain the skip-update Algorithm~\ref{al:main} operating under limited update rate.

\begin{algorithm}[t]
\caption{Skip-update Algorithm with Limited Update Rate}\label{al:main}
\begin{algorithmic}[1]
\STATE \textbf{Input:} horizon $T$, drift bounds $\{\varepsilon_t\}_{t=0}^{T-1}$, planning cap $\bar H$, exploration weight $\beta\ge 0$
\STATE Initialize dataset $\mathcal{D}_{-1} \gets \emptyset$
\STATE Set $k \gets 0$, $\tau_{0} \gets 0$
\STATE Observe initial state $s_{0}$ 
\STATE Set $\hat{s}\gets s_{0}$ \hfill\COMMENT{most recent observed state}
\STATE Initialize policy $\pi^{\mathrm{alg}}_{0}$ arbitrarily

\FOR{$t = 0$ to $T-1$}

    \IF{$t \in \mathcal{T}_{\mathrm{upd}}$}
    
        \STATE Observe $s_t$
        \STATE Set $\tau_k \gets t$
        
        \STATE Select action $a_t \sim \pi^{\mathrm{alg}}_{t}(s_t)$, apply $a_t$, observe $s_{t+1}$, and receive reward $r_t$
        \STATE Append $(s_t, a_t, s_{t+1})$ to dataset: $\mathcal{D}_{t} \gets \mathcal{D}_{t-1} \cup \{(s_t, a_t, s_{t+1})\}$
        
        \STATE Solve~\eqref{eq:ccmle} using $\mathcal{D}_t$ to obtain an estimator
        $\{\hat P_{\tau_j}\}_{j=0}^{k}$ and solution polytope $\mathcal{P}^{\mathcal{D}_t}$
        
        \STATE Compute uncertainty sets $\mathcal{U}^{\mathcal{D}_{t}}(s,a)$ and diameters
        $\mathrm{u}_{t}(s,a)=\mathrm{diam}\!\left(\mathcal{U}^{\mathcal{D}_{t}}(s,a)\right)$
        
        \STATE Set planning horizon $H_t \gets \min\{\bar H,\, T-1-t\}$
        
        \STATE Construct forecasted uncertainty $\mathrm{u}_{t+h\mid t}(s,a)$ for $h=0,\ldots,H_t-1$
        
        \STATE Solve~\eqref{eq:policy-opt-finite} to obtain an $H_t$-step plan
        $\{\pi^{\mathrm{plan}}_{t,h}\}_{h=0}^{H_t-1}$ with
        \[
            r^{(\beta)}_{t,h}(s,a)=r_{t+h}(s,a)+\beta\,\mathrm{u}_{t+h\mid t}(s,a),\quad
            x_{h+1}\sim \hat P_{\tau_k}(\cdot\mid x_h,\pi^{\mathrm{plan}}_{t,h}(x_h))
        \]
        
        \STATE Set $\pi^{\mathrm{alg}}_{t} \gets \pi^{\mathrm{plan}}_{t,0}$ \hfill\COMMENT{MPC: apply first policy}
        \STATE Set $\hat{s}\gets s_{t+1}$
        \STATE Set $k \gets k + 1$

    \ELSE
    
        \STATE Set $\pi^{\mathrm{alg}}_{t} \gets \pi^{\mathrm{alg}}_{\tau_k}$ \hfill\COMMENT{reuse most recent policy}
        \STATE Select action $a_t \sim \pi^{\mathrm{alg}}_{t}(\hat{s})$, apply $a_t$, and receive reward $r_t$
        \STATE Set $\mathcal{D}_{t} \gets \mathcal{D}_{t-1}$ \hfill\COMMENT{no new data}
        \STATE (No new state observation; keep $\hat{s}$ unchanged)
        
    \ENDIF

\ENDFOR
\end{algorithmic}
\end{algorithm}

\subsection{Dynamic Regret Objective}\label{sec:regret}

Having specified the learning and planning framework, we now evaluate its performance through a dynamic regret criterion.

\medskip
\noindent
We adopt a dynamic regret formulation that compares the cumulative reward achieved by the proposed algorithm to that of an optimal policy with full knowledge of the time-varying transition kernels and continuous access to the system state. This metric quantifies how performance degrades when the decision-maker operates with intermittently updated information and under time-varying dynamics.

\medskip
\noindent
Recall that the agent does not have access to state information at every time step.
Let
\[
    \tilde{s}_t =
        \begin{cases}
        s_t, & t \in \mathcal{T}_{\mathrm{upd}}, \\
        s_{\tau_{k(t)}}, & t \in \mathcal{T}_{\mathrm{skip}}.
        \end{cases}
\]
denote the state information available to the agent at time $t$, i.e., the true state at update times and the most recently observed state otherwise.

\medskip
\noindent
The cumulative reward achieved by Algorithm~\ref{al:main} is defined as
\begin{equation}\label{eq:alg-cost}
    J_T^{\mathrm{alg}}(s)
    =
    \mathbb{E}\!\left[
        \sum_{t=0}^{T-1}
        r_t\big(s_t, \pi_t^{\mathrm{alg}}(\tilde s_t)\big)
        \,\middle|\,
        s_{0} = s
    \right],\quad \forall s \in \mathcal{S},
\end{equation}
where the state trajectory evolves under the true time-varying transition kernels $\{P_t\}_{t=0}^{T-1}$.

\medskip
\noindent
The dynamic regret of Algorithm~\ref{al:main} is defined as
\begin{equation}\label{eq:dyn-reg}
    \mathcal{DR}(T)
    =
    \max_{s \in \mathcal{S}}
    \Big(
        J_T^\star(s) - J_T^{\mathrm{alg}}(s)
    \Big).
\end{equation}

\medskip
\noindent
This regret reflects the combined effects of temporal variation in the dynamics, estimation uncertainty, and the limited update structure that forces the agent to act based on partial and stale information.

\medskip
\noindent
Our objective is to establish an upper bound on $\mathcal{DR}(T)$ of the skip-update Algorithm~\ref{al:main}. The detailed analysis is presented in the next section.

\section{Regret Analysis}
\label{sec:regret}

We now turn to answering our central question: how the performance of the decision-maker degrades when it is forced to operate for extended periods without new information or updates. To this end, we establish an upper bound on the dynamic regret defined in Section~\ref{sec:regret}. The resulting bound characterizes the interplay between three key factors in our setting: the temporal variation of the transition kernels, the uncertainty arising from limited observations, and the duration between successive updates during which the agent must act based on stale information.

\medskip
\noindent
Due to the intermittent information structure, we decompose the dynamic regret into two components: (i) errors incurred at update times, where the agent performs estimation and planning, and (ii) errors accumulated during skip intervals, where the agent operates using stale information. The former follows similar arguments for receding-horizon planning with model mismatch~\cite{zhang2024predictive}, while the latter constitutes the main focus of our analysis.

Due to the intermittent information structure, we decompose the dynamic regret into two components: (i) errors incurred at update times, where the agent performs estimation and planning, and (ii) errors accumulated during skip intervals, where the agent operates using stale information. The former follows standard arguments for receding-horizon planning with model mismatch~\cite{zhang2024predictive}, while the latter constitutes the main focus of our analysis.

\medskip
\noindent
Recall the dynamic regret defined in~\eqref{eq:dyn-reg}. By a standard telescoping argument (see Appendix~\ref{app_0}), it can be rewritten as
\begin{equation}\label{eq:regret_decomposition}
    \mathcal{DR}(T)
    =
    \max_{s_0 \in \mathcal{S}}
    \left(
    \sum_{t \in \mathcal{T}_{\mathrm{upd}}} \Delta_t(s_0)
    +
    \sum_{t \in \mathcal{T}_{\mathrm{skip}}} \Delta_t(s_0)
    \right),
\end{equation}
where $\Delta_t(s_0)$ denotes the contribution of time $t$ to the dynamic regret along the trajectory starting from $s_0$. The first term corresponds to update times, while the second term captures the accumulation of error during skip intervals. Characterizing how this second term grows with the length of skip intervals and the rate of temporal variation constitutes the main contribution of this work.

\medskip
\noindent
To control how the errors in~\eqref{eq:regret_decomposition} propagate over time, we impose a finite-time contractiveness condition on the time-varying transition kernels. This condition will be used to bound both the update and skip contributions, and is particularly important for controlling error accumulation during skip intervals. We formalize this condition through
the following overlap coefficient.

\begin{definition}[Overlap coefficient]
Fix $m \ge 1$ and consider two sequence of policies $\{\pi^1_{t}\}_{t \ge 0}$ and $\{\pi^2_{t}\}_{t \ge 0}$.
For any $t \ge 0$, define the overlap coefficient of these policies as
\begin{equation}\label{eq:eta-timevarying}
\begin{aligned}
    \eta_t(\pi^1,\pi^2)
    :=
    \min_{s^{1},s^{2} \in \mathcal{S}}
    \sum_{s' \in \mathcal{S}}
    \min \Bigl\{
        P_{t:t+m-1}^{\pi^1_{t:t+m-1}}(s' \mid s^{1}),
        P_{t:t+m-1}^{\pi^2_{t:t+m-1}}(s' \mid s^{2})
    \Bigr\},
\end{aligned}
\end{equation}
where $P_{t:t+m-1}^{\pi_{t:t+m-1}}$ denotes the $m$-step transition kernel over $[t,t+m)$ induced by $\{\pi_{t}\}_{t \ge 0}$.
\end{definition}

\medskip
\noindent
By construction, $\eta_t(\pi^1,\pi^2) \ge 0$. To ensure sufficient contraction
relative to the optimal policy, we impose the following assumption.

\begin{assumption}[Finite-time mixing relative to the optimal policy]\label{ass:mixing}
There exist constants $m \ge 1$ and $\eta > 0$ such that for the optimal policy
$\{\pi_t^\star\}_{t=0}^{T-1}$ and the policy generated by
Algorithm~\ref{al:main}, denoted $\pi^{\mathrm{alg}}$, the overlap coefficient
satisfies
\[
    \eta_t(\pi^{\mathrm{alg}},\pi^\star) \ge \eta,
    \qquad \forall t \in \{0,\ldots,T-1\}.
\]
\end{assumption}

\medskip
\noindent
Assumption~\ref{ass:mixing} ensures that the multi-step state distributions
induced by the algorithm and the optimal policy retain a uniform amount of
overlap over any window of length $m$. This finite-time contractiveness
property is standard in nonstationary MDP analysis and is crucial for
controlling the propagation of model estimation and planning errors in the
dynamic regret analysis.

\medskip
\noindent
We next introduce several quantities that appear in the regret bound. Let
\[
    \tilde{V} := \max_{k} \operatorname{sp} (V_{k}^{\star}),
    \qquad
    \tilde{W}_{t} := \max_{k} \operatorname{sp} (\hat{W}_{t,k}^{\star}),
\]
denote uniform bounds on the span seminorm of the optimal value functions and the auxiliary value functions arising in the planning step. To quantify estimation and planning errors at update times, define
\[
    \hat{\varepsilon}_{t,i}
    :=
    \max_{(s,a)} \bigl\|P_{t+i}(\cdot \mid s,a)-\hat{P}_{t}(\cdot \mid s,a)\bigr\|_{\operatorname{tv}},
    \qquad
    \hat{\delta}_{t,i}
    :=
    \beta \max_{(s,a)} \mathrm{u}_{t+i \mid t}(s,a).
\]
In addition, to quantify the effect of temporal variation during skip intervals, define
\[
    \bar{\varepsilon}_{\tau_{k(t)},t}
    :=
    \max_{s,a} \bigl\| P_{t}(\cdot\mid s,a) - P_{\tau_{k(t)}}(\cdot\mid s,a)\bigr\|_{\mathrm{tv}},\quad \bar{\delta}_{\tau_{k(t)},t}
    :=
    \max_{s} \mathrm{sp} \big( r_{t}(s,\cdot) - r_{\tau_{k(t)}}(s,\cdot) \big).
\]
We now present the dynamic regret bound for Algorithm~\ref{al:main}.

\begin{theorem}\label{th:main}
Suppose Assumption~\ref{ass:mixing} holds. Then Algorithm~\ref{al:main} achieves the dynamic regret
\begin{equation}\label{eq:main_regret}
    \mathcal{DR}(T)
    \le
    \sum_{t=0}^{T-1}
    \left(
    \alpha^{\left\lfloor \frac{H_{\tau_{k(t)}}-1}{m} \right\rfloor} \tilde{V}
    +
    \hat{E}_{\tau_{k(t)}}
    \right)
    +
    \sum_{t \in \mathcal{T}_{\mathrm{skip}}}
    \left(
    \alpha^{\left\lfloor \frac{T-t}{m} \right\rfloor} \tilde{V}
    +
    \bar{E}_t
    \right),
\end{equation}
where $\hat{E}_t$ and $\bar{E}_t$ are defined as:
\begin{align*}
    \hat{E}_{t}
    &= \hat{\delta}_{t,0} + \hat{\varepsilon}_{t,0}\ \tilde{V} 
    + 2 \sum_{\ell = 0}^{\lfloor \frac{H_{t}-1}{m} \rfloor - 1} \alpha^{\ell} \sum_{i=1}^{m} \hat{e}_{t,\ell m+i}  + 2 \alpha^{\lfloor \frac{H_{t}-1}{m} \rfloor} \sum_{i=1}^{H_{t}-\lfloor \frac{H_{t}-1}{m} \rfloor m - 1} \hat{e}_{t,\lfloor \frac{H_{t}-1}{m} \rfloor m+i}, \\
    \bar{E}_{t}
    &= \bar{e}_{\tau_{k(t)}, t-\tau_{k(t)}} 
    + 2 \sum_{\ell =0}^{\lfloor \frac{T-t}{m}\rfloor - 1} \alpha^{\ell }\ \sum_{i=0}^{m-1} \bar{e}_{\tau_{k(t)}, \ell m+i}+ 2 \alpha^{\lfloor \frac{T-t}{m}\rfloor}\ \sum_{i=0}^{T-t - \lfloor\frac{T-t}{m}\rfloor m } \bar{e}_{\tau_{k(t)}, \lfloor \frac{T-t}{m}\rfloor m+i},
\end{align*}
with $\hat{e}_{t, j} = \hat{\varepsilon}_{t,j}\ \tilde{W}_{t} + \hat{\delta}_{t,j}$ and $\bar{e}_{t,j} = \bar{\varepsilon}_{\tau_{k(t)}, j}\ \tilde{V} + \bar{\delta}_{\tau_{k(t)}, j}$.
\end{theorem}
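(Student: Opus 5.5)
We start from the telescoping decomposition \eqref{eq:regret_decomposition} and make the per-step terms explicit. We would take
\[
\Delta_t(s_0) = \mathbb{E}\big[V_t^\star(s_t) - r_t(s_t,a_t) - (P_t V_{t+1}^\star)(s_t,a_t)\big],
\]
where the expectation is along the algorithm's trajectory from $s_0$. This is the performance-difference form. It equals the Bellman gap of the applied action with respect to the optimal value function at time $t$. The two sums in \eqref{eq:main_regret} are then bounded separately, by comparing $V^\star_t$ with a surrogate built from information frozen at $\tau_{k(t)}$. The per-step bound for $t\in\mathcal{T}_{\mathrm{upd}}$ is also reused for every $t$ inside the interval $[\tau_{k(t)},\tau_{k(t)+1})$ when forming the first sum; this accounts for the fact that the first sum in \eqref{eq:main_regret} runs over all $t$.

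\emph{Update-time contribution.} Fix $t=\tau_k$. We would follow the receding-horizon argument with model mismatch of~\cite{zhang2024predictive}. First, decompose the Bellman gap by inserting the planner's value $\hat W^\star_{\tau_k,0}$ and its one-step Q-function under $\hat P_{\tau_k}$ and $r^{(\beta)}$. The action $a_t$ is greedy for this Q-function, so the gap is bounded by twice $\mathrm{sp}(V_t^\star - \hat W^\star_{\tau_k,0})$-type quantities, plus one-step terms. By the span/TV inequality from the notation section, these one-step terms are $\hat\varepsilon_{t,0}\tilde V + \hat\delta_{t,0}$.

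Second, bound the difference between the $H_k$-step truncated optimal value and $\hat W^\star_{\tau_k,0}$. Unroll both Bellman recursions over $h=1,\dots,H_k-1$. At each step the mismatch is $(P_{t+h}-\hat P_t)\hat W^\star_{\tau_k,h+1}$ plus the reward perturbation, so the per-step error is $\hat e_{t,h}$. The errors propagate through products of $m$-step kernels, and Assumption~\ref{ass:mixing} gives a Dobrushin-type contraction: span is multiplied by $\alpha:=1-\eta$ every $m$ steps. Grouping the steps into blocks of length $m$ then produces the geometric sums $\sum_\ell \alpha^\ell\sum_{i=1}^m \hat e_{t,\ell m+i}$ together with the partial last block.

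Third, the truncation error from replacing $V^\star_{t+H_k}$ by $0$ is contracted over $\lfloor (H_k-1)/m\rfloor$ blocks, which gives $\alpha^{\lfloor (H_k-1)/m\rfloor}\tilde V$. Together these three pieces give $\alpha^{\lfloor (H-1)/m\rfloor}\tilde V+\hat E_{\tau_k}$.

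\emph{Skip-time contribution.} For $t\in\mathcal{T}_{\mathrm{skip}}$, the action is chosen from the stale state $\tilde s_t=s_{\tau_{k(t)}}$ by a policy optimized for time $\tau_{k(t)}$. We compare against a frozen problem whose kernels and rewards are those at $\tau_{k(t)}$, shifted by $j=t-\tau_{k(t)}$ steps. The one-step discrepancy at offset $j$ is bounded by the span/TV inequality as $\bar\varepsilon_{\tau_{k(t)},j}\tilde V+\bar\delta_{\tau_{k(t)},j}=\bar e$; this gives the leading term $\bar e_{\tau_{k(t)},t-\tau_{k(t)}}$. The same mismatch persists from $t$ to $T$. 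Propagating it with the $m$-block contraction yields the geometric sum in $\bar E_t$. The residual from the remaining horizon contributes $\alpha^{\lfloor (T-t)/m\rfloor}\tilde V$.

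\emph{Main obstacle.} We expect the main difficulty to be justifying the contraction factor $\alpha$ in both parts. Assumption~\ref{ass:mixing} is stated only for the pair $(\pi^{\mathrm{alg}},\pi^\star)$ under the true kernels $P_t$. The error recursions, however, involve value differences propagated under mixtures of these kernels and, in the planning step, under $\hat P$. Our first attempt would be to write every difference as $(\mu_1-\mu_2)f$, where $\mu_1$ and $\mu_2$ are the $m$-step laws under $\pi^{\mathrm{alg}}$ and $\pi^\star$ respectively. The overlap coefficient then gives the coupling bound $\|\mu_1-\mu_2\|_{\mathrm{tv}}\le 1-\eta$. Any discrepancy between $\hat P$ and the true kernels is pushed into the additive $\hat e$ and $\bar e$ terms rather than into the contraction. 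A secondary issue is the stale state $\tilde s_t\neq s_t$. We would handle it by bounding against the worst case over states, which is already absorbed into the span terms.
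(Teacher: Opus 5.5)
Your architecture matches the paper's. You telescope into per-step $Q^\star$-gaps. At update times you insert the planner's $\hat Z^\star_{\tau_k,0}$, use greediness, and propagate $\mathrm{sp}(V^\star_{t+1}-\hat W^\star_{t,1})$ by block contraction, which is Lemma~\ref{lem:multi_stage_error} re-derived rather than cited. At skip times you compare the time-$t$ problem with the time-$\tau_{k(t)}$ problem via a time shift, and reuse the state-uniform update-time bound for the remaining piece.

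The step that fails is your closing claim that the stale state is ``absorbed into the span terms.'' After the time mismatch is removed, a skip time still contributes
\[
Q^\star_t\big(s_t,\pi^{\mathrm{alg}}_{\tau_{k(t)}}(s_t)\big)-Q^\star_t\big(s_t,\pi^{\mathrm{alg}}_{\tau_{k(t)}}(s_{\tau_{k(t)}})\big).
\]
A worst case over states only bounds this by $\max_s\big(\mathrm{sp}(r_t(s,\cdot))+\max_{a,a'}\|P_t(\cdot\mid s,a)-P_t(\cdot\mid s,a')\|_{\mathrm{tv}}\,\tilde V\big)$. Nothing in \eqref{eq:main_regret} controls the spread of $r_t(s,\cdot)$ across actions. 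The quantities $\tilde V$ and $\tilde W_t$ are spans over states, $\bar\delta$ is the span of a \emph{temporal difference} of rewards, and the remaining terms are kernel errors and bonuses.

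In fact the inequality is false. Take $\mathcal S=\mathcal A=\{0,1\}$, $r_t(s,a)=\mathbf{1}\{s=a\}$, $P_t(\cdot\mid s,a)$ uniform for all $t,s,a$, $\varepsilon_t=0$, $\beta=0$, and $m=1$.
\begin{itemize}
\item Then $\eta=1$ and $V^\star_k\equiv T-k$, and every $\hat W^\star_{\tau_k,h}$ is constant whatever $\hat P$ is.
\item Hence $\tilde V=\tilde W_t=\hat\delta=\bar\delta=0$, and the right-hand side of \eqref{eq:main_regret} is $0$.
\item The planner plays $a=s$, so at each skip time the algorithm plays $s_{\tau_{k(t)}}$, while $s_t$ is uniform and independent of it.
\item Under \eqref{eq:alg-cost} this gives $\mathcal{DR}(T)=|\mathcal T_{\mathrm{skip}}|/2>0$. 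With the algorithm's convention $\hat s=s_{\tau_k+1}$, every skip time $t\ge\tau_{k(t)}+2$ still costs $1/2$.
\end{itemize}

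So the statement cannot be proved as written, and the paper's own proof has the same hole. In the proof of Lemma~\ref{lem:Q_value_diff_skip} this piece is term (III); it is bounded exactly as above and then silently dropped when the bounds are combined. A correct version must carry an additional per-skip-time state-mismatch term of that form.

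Two secondary points, where the paper is equally loose:
\begin{itemize}
\item[(i)] In the time-shifted comparison, the offset-$i$ discrepancy is $\max_{s,a}\|P_{t+i}(\cdot\mid s,a)-P_{\tau_{k(t)}+i}(\cdot\mid s,a)\|_{\mathrm{tv}}$. That is not the deviation from $P_{\tau_{k(t)}}$ recorded by $\bar\varepsilon_{\tau_{k(t)},\cdot}$, so ``the same mismatch persists'' does not produce $\bar E_t$.
\item[(ii)] Extracting $\alpha$ from the $m$-step laws under $\pi^{\mathrm{alg}}$ and $\pi^\star$ does not fit the recursion. In the sandwich argument behind Lemma~\ref{lem:multi_stage_error}, the upper bound on $V^\star-\bar V^\star$ propagates under $\pi^\star$, and the lower bound under the comparison problem's optimal policy, both under the true kernels. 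That policy is the planner's $\pi^{\mathrm{plan}}_{\tau_k,h}$ for $h\ge 1$, or the shifted $\pi^\star_{t+\cdot}$ in the skip case. The overlap you need is therefore between $\pi^\star$ and those policies, which Assumption~\ref{ass:mixing} as stated does not provide.
\end{itemize}
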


\medskip
\noindent
The proof is provided in detail in Appendix~\ref{app_0}. We now interpret the terms appearing in Theorem~\ref{th:main}. The regret bound in~\eqref{eq:main_regret} decomposes into two contributions: errors incurred at update times and errors accumulated during skip intervals.
\begin{itemize}
    \item \textbf{Update-time error.}
    The term
    $\alpha^{\left\lfloor \frac{H_{\tau_{k(t)}}-1}{m}\right\rfloor}\tilde V$ reflects the use of finite-horizon planning within a model predictive control framework, where optimizing over a truncated horizon and executing only the first action introduces a mismatch with the full-horizon optimal policy. Under Assumption~\ref{ass:mixing}, this mismatch contracts over blocks of length $m$. The term $\hat{E}_{\tau_{k(t)}}$ captures errors arising from planning with an estimated model, including both statistical estimation error and mismatch with the true time-varying dynamics. These errors are also attenuated by the contraction property.
    \item \textbf{Skip-interval error.}
    The regret incurred during skip intervals arises from a recursive error propagation combined with mismatch effects induced by stale information. The instantaneous regret at a skip time $t \in \mathcal{T}_{\mathrm{skip}}$ is decomposed as
    \[
    \Delta_t
    =
    \Delta_{\tau_{k(t)}}
    +
    \text{(time mismatch)}
    +
    \text{(state mismatch)}.
    \]    
    The first term, $\Delta_{\tau_{k(t)}}$, shows that regret at time $t$ inherits the error from the most recent update time, reflecting that no correction is performed during skip intervals. The second term captures time mismatch, i.e., the discrepancy between the optimal decision problems at time $t$ and at the last update time $\tau_{k(t)}$, arising from temporal drift in both transition kernels and rewards:
    \[
    \bar{\varepsilon}_{\tau_{k(t)},t}
    =
    \max_{s,a}
    \|P_t(\cdot|s,a) - P_{\tau_{k(t)}}(\cdot|s,a)\|_{\mathrm{tv}},
    \quad
    \bar{\delta}_{\tau_{k(t)},t}
    =
    \max_s \mathrm{sp}\big(r_t(s,\cdot) - r_{\tau_{k(t)}}(s,\cdot)\big).
    \]
    The third term captures state mismatch, which arises because the policy is evaluated at the stale state $s_{\tau_{k(t)}}$ instead of the current state $s_t$.
    
    These components jointly generate the skip-interval error described in the theorem. The quantities $\bar{\varepsilon}_{\tau_{k(t)},\cdot}$ and $\bar{\delta}_{\tau_{k(t)},\cdot}$ accumulate over time and give rise to the term $\bar{E}_t$, which captures how discrepancies grow with the length of the skip interval. At the same time, their impact on future performance is not uniform: by the multi-step contraction property (Lemma~\ref{lem:multi_stage_error}), the influence of errors incurred at time $t$ decays geometrically over time, yielding the factor $\alpha^{\lfloor (T-t)/m \rfloor}\tilde V$.
    
    Overall, the regret during skip intervals is governed by the interplay between error accumulation, driven by temporal drift and stale decision-making, and error propagation, controlled by the contraction coefficient $\alpha$. This structure explains how longer skip intervals and faster temporal variation increase regret, while stronger mixing mitigates its long-term effect.  
\end{itemize}

\section*{Acknowledgment}
This work was supported by the Office of Naval Research under grant no. N00014-25-1-2369.

\bibliographystyle{plain}
\bibliography{references}

\appendices
\section{Helper Lemma}\label{app_00}

We state a helper lemma that quantifies how the difference between two optimal value functions propagates over multiple stages under the mixing assumption. This result is adapted from \cite{zhang2024predictive}, and will be used to control the error between value functions associated with different models. 

\medskip
\noindent
Let the one-stage optimal Bellman operators $\mathcal{B}_t$ and $\bar{\mathcal{B}}_t$ be defined as
\begin{align*}
    (\mathcal{B}_t f)(s)
    &:=
    \max_{a \in \mathcal{A}}
    \Big\{
        r_t(s,a)
        +
        \mathbb{E}_{s' \sim P_t(\cdot \mid s,a)}[f(s')]
    \Big\},\\
    (\bar{\mathcal{B}}_t f)(s)
    &:=
    \max_{a \in \mathcal{A}}
    \Big\{
        \bar{r}_t(s,a)
        +
        \mathbb{E}_{s' \sim \bar{P}_t(\cdot \mid s,a)}[f(s')]
    \Big\},
\end{align*}
for all $s \in \mathcal{S}$ and any $f:\mathcal{S} \rightarrow \mathbb{R}$. The corresponding optimal value functions $\{V_t^\star\}_{t=0}^{T}$ and $\{\bar{V}_t^\star\}_{t=0}^{T}$ satisfy
\[
    V_t^\star = \mathcal{B}_t V_{t+1}^\star,
    \qquad
    \bar{V}_t^\star = \bar{\mathcal{B}}_t \bar{V}_{t+1}^\star,
    \qquad t = 0,\dots,T-1,
\]
with $V_T^\star \equiv 0$ and $\bar{V}_T^\star \equiv 0$.

\begin{lemma}[Multi-stage error propagation under mixing {\cite{zhang2024predictive}}]
\label{lem:multi_stage_error}
Suppose Assumption~\ref{ass:mixing} holds with constants $m$ and $\eta$, and define $\alpha = 1 - \eta$. Then, for any integer $N \geq 1$ such that $N \leq T - t$, the following holds:
\begin{align*}
    \mathrm{sp} \big( V^{\star}_{t} - \bar{V}^{\star}_{t} \big)
    &\leq  
    \alpha^{\lfloor \frac{N}{m} \rfloor}  \ \mathrm{sp} \Big(  V^{\star}_{t+N} - \bar{V}^{\star}_{t+N}\Big)\\
    &\quad + 2 \alpha^{\lfloor \frac{N}{m} \rfloor}
    \sum_{i=1}^{N-\lfloor \frac{N}{m} \rfloor m - 1}
    \Big(
        \bar{\varepsilon}_{t+\lfloor \frac{N}{m} \rfloor m+i}\ 
        \mathrm{sp} \big(  \bar{V}^{\star}_{t+(\lfloor \frac{N}{m} \rfloor )m+1} \big)
        + \bar{\delta}_{t+\lfloor \frac{N}{m} \rfloor m+i}
    \Big)\\
    &\quad + 2 \sum_{\ell = 0}^{\lfloor \frac{N}{m} \rfloor - 1}
    \alpha^{\ell}
    \sum_{i=1}^{m}
    \Big(
        \bar{\varepsilon}_{t+\ell m+i}\ 
        \mathrm{sp} \big(  \bar{V}^{\star}_{t+(\ell+1)m+1} \big)
        + \bar{\delta}_{t+\ell m+i}
    \Big).
\end{align*}
Here,
\[
\bar{\varepsilon}_{k} = \max_{(s,a)} \|P_{k}(\cdot \mid s,a)-\bar{P}_{k}(\cdot \mid s,a)\|_{\operatorname{tv}},
\qquad
\bar{\delta}_{k} = \max_{(s,a)} |r_{k}(s,a)-\bar{r}_{k}(s,a)|.
\]
\end{lemma}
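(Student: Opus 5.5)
The plan is to reduce everything to a one-step sandwich inequality for the difference $D_k := V^\star_k - \bar V^\star_k$, iterate it over a block of $m$ stages, and use the overlap coefficient to obtain a contraction of the span by $\alpha = 1-\eta$ per block. The result then follows by induction over the $\lfloor N/m\rfloor$ full blocks plus one remainder block.

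\emph{Step 1 (one-step sandwich).} Fix $k$ and $s$. Let $a^\star_k(s)$ be greedy for $\mathcal{B}_k V^\star_{k+1}$ and $\bar a_k(s)$ greedy for $\bar{\mathcal{B}}_k \bar V^\star_{k+1}$. Evaluating both maxima at the same action gives
\[
D_k(s) \le \bigl(r_k-\bar r_k\bigr)(s,a^\star) + \bigl(P_k D_{k+1}\bigr)(s,a^\star) + \bigl((P_k-\bar P_k)\bar V^\star_{k+1}\bigr)(s,a^\star),
\]
and the symmetric lower bound holds with $\bar a$ in place of $a^\star$. Subtracting the constant $\min \bar V^\star_{k+1}$ inside the last term, it is bounded in absolute value by $\bar\varepsilon_k\,\mathrm{sp}(\bar V^\star_{k+1})$. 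Hence
\begin{align*}
D_k(s) &\le \bigl(P_k^{\pi^\star_k} D_{k+1}\bigr)(s) + e_k, &
D_k(s) &\ge \bigl(P_k^{\bar\pi_k} D_{k+1}\bigr)(s) - e_k,
\end{align*}
where $e_k := \bar\delta_k + \bar\varepsilon_k\,\mathrm{sp}(\bar V^\star_{k+1})$.

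\emph{Step 2 (block contraction).} Iterate the two inequalities $m$ times from $t$; this is legitimate because the kernels are positive and of total mass one. The upper bound holds along the policy sequence $\pi^\star$ and the lower bound along $\bar\pi$. This gives
\[
D_t(s) \le \bigl(P^{\pi^\star}_{t:t+m-1} D_{t+m}\bigr)(s) + \sum_{i} e_{t+i},
\qquad
D_t(s') \ge \bigl(P^{\bar\pi}_{t:t+m-1} D_{t+m}\bigr)(s') - \sum_i e_{t+i}.
\]
Take $s$ maximizing and $s'$ minimizing $D_t$, and subtract. Use $(\mu-\nu)f \le \|\mu-\nu\|_{\mathrm{tv}}\,\mathrm{sp}(f)$ together with $\|\mu-\nu\|_{\mathrm{tv}} = 1-\sum_{x}\min\{\mu(x),\nu(x)\} \le 1-\eta = \alpha$. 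This yields
\[
\mathrm{sp}(D_t) \le \alpha\,\mathrm{sp}(D_{t+m}) + 2\sum_{i} e_{t+i}.
\]
At this point I identify the pair $(\pi^\star,\bar\pi)$ with the pair covered by Assumption~\ref{ass:mixing}. In the paper's application, $\bar\pi$ is the algorithm's (planning) policy.

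\emph{Step 3 (induction).} Unroll the block inequality $\lfloor N/m\rfloor$ times, so that block $\ell$ picks up the factor $\alpha^\ell$. Then bound the final partial block of length $N-\lfloor N/m\rfloor m$ trivially, using Step~1 with contraction factor at most $1$, weighted by $\alpha^{\lfloor N/m\rfloor}$. This produces exactly the three terms in the statement. Within each block, the spans $\mathrm{sp}(\bar V^\star_{k+1})$ are upper-bounded by the single representative span indexed as in the lemma; alternatively, I would simply keep the per-step span and then dominate it by $\max_k \mathrm{sp}(\bar V^\star_k)$ as is done in Theorem~\ref{th:main}. The final step is routine index bookkeeping.

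The main obstacle is Step~2. The overlap coefficient must be applied to the specific pair of greedy policy sequences produced by the two Bellman recursions, including mismatched starting states $s,s'$. One must also justify that the relevant policies are those for which Assumption~\ref{ass:mixing} is posited. I would handle this by stating the lemma for the greedy pair and noting that the assumption is imposed on precisely that pair.
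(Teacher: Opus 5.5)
Your Steps 1 and 2 are the standard sandwich-and-overlap argument that the paper defers to its source, and as inequalities they are correct. Two parts of how you close the argument do not hold.

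\textbf{The mixing hypothesis.} In Step 2 you need overlap for the pair $(\pi^\star,\bar\pi)$, where $\bar\pi$ is greedy for $\bar V^\star$ and both policies are propagated by the true kernels. Assumption~\ref{ass:mixing} covers only $(\pi^{\mathrm{alg}},\pi^\star)$, and $\bar\pi$ is not $\pi^{\mathrm{alg}}$ in either application:
\begin{itemize}
\item In Lemma~\ref{lem:Q_value_diff_obs}, $\bar\pi$ is $\pi^{\mathrm{plan}}_{t,h}$ at stages $h\ge 1$. The executed policy, by contrast, is always a first-stage plan, reused during skips.
\item In Lemma~\ref{lem:Q_value_diff_skip}, $\bar\pi$ is a time-shifted $\pi^\star$.
\end{itemize}
So your claim that the assumption is imposed on precisely that pair is not true. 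You must add an overlap hypothesis for the greedy pair, or a uniform one over pairs of Markov policies, and say explicitly that this strengthens the lemma's hypothesis.

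\textbf{The index bookkeeping in Step 3.} This is the more serious problem: Step 3 does not produce the stated bound, and no bookkeeping can. With $L=\lfloor N/m\rfloor$, your unrolling gives
\[
\mathrm{sp}(D_t)\le \alpha^{L}\,\mathrm{sp}(D_{t+N})+2\alpha^{L}\sum_{i=0}^{N-Lm-1}e_{t+Lm+i}+2\sum_{\ell=0}^{L-1}\alpha^{\ell}\sum_{i=0}^{m-1}e_{t+\ell m+i},
\]
whose stage indices run over $t,\dots,t+N-1$. The statement's sums start at $i=1$, so they cover only $t+1,\dots,t+N-1$. The stage-$t$ mismatch $e_t$ never appears.

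The statement as written is in fact false. Consider the following instance:
\begin{itemize}
\item two states and one action;
\item $\bar P_k=P_k$, with any kernels of positive overlap (e.g.\ uniform), and $m=1$;
\item $\bar r\equiv 0$, $r_k\equiv 0$ for $k\ge 1$, $r_0(s_1)=1$, $r_0(s_2)=0$;
\item $T\ge 3$, $t=0$, $N=1$.
\end{itemize}
Then $V^\star_1=\bar V^\star_1=0$ and $\mathrm{sp}(V^\star_0-\bar V^\star_0)=1$, yet every term on the right-hand side vanishes.

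Separately, you cannot replace the per-step spans $\mathrm{sp}(\bar V^\star_{k+1})$ in block $\ell$ by the single span $\mathrm{sp}(\bar V^\star_{t+(\ell+1)m+1})$. Finite-horizon value spans are not monotone in $k$. They typically shrink toward the terminal stage, where $\bar V^\star_T\equiv 0$.

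What your argument actually proves is the display above, with per-step spans or with $\max_k\mathrm{sp}(\bar V^\star_k)$. That is also all Theorem~\ref{th:main} needs, since it passes to $\tilde V$ and $\tilde W_t$. You should state and prove that corrected version rather than claim the lemma verbatim.
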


\noindent
This lemma shows that the span difference between value functions contracts geometrically over blocks of length $m$, up to additive terms that capture discrepancies in transition kernels and rewards. The proof follows directly from \cite{zhang2024predictive} and is omitted here.

\section{Proof of the Theorem~\ref{al:main}}\label{app_0}

In this appendix, we prove Theorem~\ref{th:main}. We relate the dynamic regret to
(i) finite-time contraction under Assumption~\ref{ass:mixing},
(ii) estimation and bonus errors arising from planning with estimated probability transitions,
(iii) finite planning horizon effects,
(iv) pauses in updates due to limited information and computation, and
(v) reuse of stale policies during skip intervals. For this, let us start with some definitions and notation.

\paragraph{Value functions.}
Recall the optimal value function of the TVMDP $\mathcal{M}$ at time $t$,
\[
    V_t^{\star}(s)
    :=
    \max_{\{\pi_k\}_{k=t}^{T-1}}
    \mathbb{E}\!\left[
        \sum_{k=t}^{T-1} r_k\big(s_k,\pi_k(s_k)\big)
        \,\middle|\, s_t = s
    \right],\qquad \forall s \in \mathcal{S}
\]
with $V^{\star}_{T} \equiv 0$, which satisfies
\[
    V_t^{\star}(s)
    =
    \max_{a \in \mathcal{A}}
    \left\{
        r_t(s,a)
        +
        \mathbb{E}_{s' \sim P_t(\cdot \mid s,a)}
        \big[ V_{t+1}^{\star}(s') \big]
    \right\},\qquad \forall s \in \mathcal{S}.
\]
Define
\[
    Q_t^\star(s,a):= r_t(s,a)+\mathbb{E}_{s' \sim P_t(\cdot \mid s,a)}[V_{t+1}^{\star}(s')],\qquad \forall (s, a) \in \mathcal{S} \times \mathcal{A}
\]
then $V_t^\star(s)=\max_{a\in\mathcal{A}}Q_t^\star(s,a)$ for any $s \in \mathcal{S}$.

\medskip
\noindent
Similarly, the value function of the policy implemented by Algorithm~\ref{al:main} is
\[
    V_t^{\mathrm{alg}}(s)
    :=
    \mathbb{E}\!\left[
        \sum_{k=t}^{T-1} r_k\big(s_k,\pi_k^{\mathrm{alg}}(s_k)\big)
        \,\middle|\, s_t = s
    \right], \qquad \forall s \in \mathcal{S}.
\]
Hence, the dynamic regret of Algorithm~\ref{al:main} defined in~\eqref{eq:dyn-reg} can be expressed as
\[
    \mathcal{DR}(T)=\max_{s_0\in\mathcal{S}}\big(V_0^\star(s_0)-V_0^{\mathrm{alg}}(s_0)\big).
\]

\paragraph{Telescoping decomposition.}
For ease of notation, let us denote $a_{t}^\star = \arg\max_{a}Q_{t}^\star(s_{t},a)$ and $a_{t}^{\mathrm{alg}} \sim \pi_{t}^{\mathrm{alg}}(s_{\tau_{k(t)}})$. Now we can rewrite $V_0^\star(s_0)-V_0^{\mathrm{alg}}(s_0)$ as
\begin{align*}
    V_0^\star(s_0)-V_0^{\mathrm{alg}}(s_0)
    &= 
    r_0(s_{0},a_{0}^{\star})
    +
    \mathbb{E}_{s_{1} \sim P_0(\cdot \mid s_0,a_0^\star)}
    \big[ V_{1}^{\star}(s_{1}) \big] - r_0(s_{0},a_{0}^{\mathrm{alg}})
    -
    \mathbb{E}_{s_{1} \sim P_0(\cdot \mid s_0,a_{0}^{\mathrm{alg}})}
    \big[ V_{1}^{\mathrm{alg}}(s_{1}) \big]\\
    &+ \mathbb{E}_{s_{1} \sim P_0(\cdot \mid s_0,a_{0}^{\mathrm{alg}})}
    \big[ V_{1}^{\star}(s_{1}) \big] - \mathbb{E}_{s_{1} \sim P_0(\cdot \mid s_0,a_{0}^{\mathrm{alg}})}
    \big[ V_{1}^{\star}(s_{1}) \big]\\
    &= 
    Q_0^\star(s_0,a_0^\star)-Q_0^\star(s_0,a_0^{\mathrm{alg}})
    + \mathbb{E}_{s_1\sim P_0(\cdot\mid s_0,a_0^{\mathrm{alg}})}
    \big[V_1^\star(s_1)-V_1^{\mathrm{alg}}(s_1)\big],
\end{align*}        
Iterating for $t=0,\dots,T-1$ yields
\begin{equation}\label{eq:V0_gap}
\begin{aligned}
    V_0^{\star}(s_0) - V_0^{\mathrm{alg}}(s_0)
    &=
    Q_{0}^{\star}(s_{0},a_{0}^\star) - Q_{0}^{\star}(s_{0},a_{0}^{\mathrm{alg}})\\
    &\quad + \sum_{k=1}^{T-1} \mathbb{E}_{s_{1}\sim P_{0}(\cdot \mid s_{0},a_{0}^{\mathrm{alg}})}\ \dots \mathbb{E}_{s_{k}\sim P_{k-1}(\cdot \mid s_{k-1},a_{k-1}^{\mathrm{alg}})} 
    \!\big[ Q_{k}^{\star}(s_{k},a_{k}^\star) - Q_{k}^{\star}(s_{k},a_{k}^{\mathrm{alg}}) \big],  
\end{aligned}
\end{equation}

\medskip
\noindent
Now define $\Delta_t(s_{0})$ as
\begin{equation}\label{eq:per-step-regret}
    \Delta_t(s_{0}) \;:=\; \mathbb{E}_{s_{1}\sim P_{0}(\cdot \mid s_{0},a_{0}^{\mathrm{alg}})}\ \dots \mathbb{E}_{s_{t}\sim P_{t-1}(\cdot \mid s_{t-1},a_{t-1}^{\mathrm{alg}})} 
    \!\big[ Q_{t}^{\star}(s_{t},a_{t}^\star) - Q_{t}^{\star}(s_{t},a_{t}^{\mathrm{alg}}) \big],\qquad \forall t\geq 1,
\end{equation}
and $\Delta_{0}(s_{0}) := Q_{0}^{\star}(s_{0},a_{0}^\star) - Q_{0}^{\star}(s_{0},a_{0}^{\mathrm{alg}})$.  then we can rewrite the dynamic regret as
\begin{equation}
    \mathcal{DR}(T)
    \;=\;
    \max_{s_{0} \in \mathcal{S}}\ \sum_{t=0}^{T-1} G_t(s_{0})
    \;=\;
    \max_{s_{0} \in \mathcal{S}}\Bigg\{ \underbrace{\sum_{t\in\mathcal{T}_{\mathrm{upd}}} \Delta_t(s_{0})}_{\text{update steps}}
    \;+\;
    \underbrace{ \sum_{t\in\mathcal{T}_{\mathrm{skip}}} \Delta_t(s_{0})}_{\text{skip steps}} \Bigg\}.
\label{eq:two-way-decomp}
\end{equation}

\medskip
\noindent
Next, we state two lemmas: one for bounding the optimal $Q$-value difference at update times, i.e. $\Delta_{t}(s_{0})$ for $t \in \mathcal{T}_{\mathrm{upd}}$, and another for bounding the optimal $Q$-value difference at skip times, i.e. $\Delta_{t}(s_{0})$ for $t \in \mathcal{T}_{\mathrm{skip}}$.

\begin{lemma}\label{lem:Q_value_diff_obs}
Suppose $t \in \mathcal{T}_{\mathrm{upd}}$, then $G_{t}(s_{0})$ can be bounded as:
\begin{align*}
    \Delta_{t}(s_{0})
    &\leq 
    \alpha^{\lfloor \frac{H_{t}-1}{m} \rfloor}  \ \tilde{V} + \hat{\delta}_{t,0} + \hat{\varepsilon}_{t,0}\ \tilde{V} 
    + 2 \sum_{\ell = 0}^{\lfloor \frac{H_{t}-1}{m} \rfloor - 1} \alpha^{\ell} 
    \sum_{i=1}^{m} \Big( \hat{\varepsilon}_{t,\ell m+i}\ \tilde{W}_{t} + \hat{\delta}_{t,\ell m+i}  \Big)\\
    &\quad + 2 \alpha^{\lfloor \frac{H_{t}-1}{m} \rfloor} 
    \sum_{i=1}^{H_{t}-\lfloor \frac{H_{t}-1}{m} \rfloor m - 1} 
    \Big( \hat{\varepsilon}_{t,\lfloor \frac{H_{t}-1}{m} \rfloor m+i}\ \tilde{W}_{t} 
    + \hat{\delta}_{t,\lfloor \frac{H_{t}-1}{m} \rfloor m+i}  \Big),
\end{align*}
where $\tilde{V} = \max_{k} \operatorname{sp} (V_{k}^{\star})$, 
$\tilde{W}_{t} = \max_{k} \operatorname{sp} (\hat{W}_{t,k}^{\star})$,  
$\hat{\varepsilon}_{t,i} := \max_{(s,a)} \|P_{t+i}(\cdot \mid s,a)-\hat{P}_{t}(\cdot \mid s,a)\|_{\operatorname{tv}}$, 
and $\hat{\delta}_{t,i} := \beta  \max_{(s,a)} \mathrm{u}_{t+i \mid t}(s,a)$. 
\end{lemma}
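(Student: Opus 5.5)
Since $\Delta_t(s_0)$ is an expectation over states reached under the algorithm's policy, it suffices to bound the per-state quantity $Q_t^\star(s,a_t^\star)-Q_t^\star(s,a_t^{\mathrm{alg}})$ uniformly in $s$. At an update time $t$ we have $\tilde s_t=s_t$, and $a_t^{\mathrm{alg}}$ is the first-stage maximizer of the planning problem \eqref{eq:policy-opt-finite}. The planning problem defines an auxiliary $Q$-function
\[
\hat Q_{t,0}(s,a):=r^{(\beta)}_{t,0}(s,a)+\mathbb{E}_{s'\sim\hat P_t(\cdot\mid s,a)}\big[\hat W^\star_{t,1}(s')\big],
\]
and $a_t^{\mathrm{alg}}$ maximizes $\hat Q_{t,0}(s,\cdot)$. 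The standard argument is the sandwich
\[
Q^\star_t(s,a^\star)-Q^\star_t(s,a^{\mathrm{alg}})\le \big[Q^\star_t(s,a^\star)-\hat Q_{t,0}(s,a^\star)\big]-\big[Q^\star_t(s,a^{\mathrm{alg}})-\hat Q_{t,0}(s,a^{\mathrm{alg}})\big].
\]
This is bounded by the span, over $(s,a)$, of $Q^\star_t-\hat Q_{t,0}$. Any constant offset cancels, which is why spans are the right currency throughout.

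Next I would split $Q_t^\star-\hat Q_{t,0}$ into three pieces.
\begin{enumerate}
\item The reward difference $r_t-r^{(\beta)}_{t,0}=-\beta\,\mathrm{u}_{t\mid t}$ contributes $\hat\delta_{t,0}$.
\item The kernel mismatch $(P_t-\hat P_t)V^\star_{t+1}$ is controlled by the span bound stated in the Notation paragraph and contributes $\hat\varepsilon_{t,0}\,\tilde V$.
\item The value mismatch $\hat P_t\,(V^\star_{t+1}-\hat W^\star_{t,1})$ has span at most $\mathrm{sp}(V^\star_{t+1}-\hat W^\star_{t,1})$.
\end{enumerate}
The first two pieces give exactly the terms $\hat\delta_{t,0}+\hat\varepsilon_{t,0}\tilde V$ in the statement.

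To bound $\mathrm{sp}(V^\star_{t+1}-\hat W^\star_{t,1})$, I would invoke Lemma~\ref{lem:multi_stage_error} with the barred model taken to be the auxiliary TVMDP. Its kernel is $\bar P_{t+i}=\hat P_t$ for all $i$, its reward is $\bar r_{t+i}=r^{(\beta)}_{t,i}$, and its terminal value is zero at stage $t+H_t$. Then $\bar\varepsilon_{t+i}=\hat\varepsilon_{t,i}$ and $\bar\delta_{t+i}=\hat\delta_{t,i}$, and I would bound $\mathrm{sp}(\bar V^\star)$ by $\tilde W_t$. Applying the lemma from stage $t+1$ over $N=H_t-1$ steps yields the two geometric sums with $\hat e_{t,j}=\hat\varepsilon_{t,j}\tilde W_t+\hat\delta_{t,j}$, plus the residual $\alpha^{\lfloor (H_t-1)/m\rfloor}\mathrm{sp}(V^\star_{t+H_t}-0)$. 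The residual is at most $\alpha^{\lfloor (H_t-1)/m\rfloor}\tilde V$, because the auxiliary terminal value vanishes. Collecting terms gives the claimed bound.

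The main obstacle is the applicability of Lemma~\ref{lem:multi_stage_error}. It is stated for two value sequences sharing a horizon $T$ with zero terminal values, whereas here the auxiliary problem stops at $t+H_t<T$. I would handle this by applying the lemma's one-step contraction argument directly on the window $[t+1,t+H_t]$, with arbitrary (nonzero) terminal data $V^\star_{t+H_t}$ versus $0$. Inspecting the proof in \cite{zhang2024predictive} shows that the terminal span term is carried along unchanged.

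A second subtlety is that the mixing in Assumption~\ref{ass:mixing} is stated for the pair $(\pi^{\mathrm{alg}},\pi^\star)$ under the true kernels, while the lemma's contraction involves the maximizing policies of both models. I would accept this as the paper's standing interpretation of the assumption and not reprove it.

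Finally, the index shifts in the lemma (sums starting at $i=1$, and $\mathrm{sp}(\bar V^\star_{\cdot})$ bounded uniformly) need care but are routine.
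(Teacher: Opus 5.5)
Your proposal is correct and matches the paper's proof step for step: the same sandwich through the planner's $Q$-function $\hat Z^\star_{t,0}$ (your $\hat Q_{t,0}$), the same split into $\hat\delta_{t,0}+\hat\varepsilon_{t,0}\tilde V+\mathrm{sp}(V^\star_{t+1}-\hat W^\star_{t,1})$, and the same use of Lemma~\ref{lem:multi_stage_error} with $N=H_t-1$ and $\hat W^\star_{t,H_t}\equiv 0$ to produce the residual $\alpha^{\lfloor (H_t-1)/m\rfloor}\tilde V$. Your caveats are genuine but are equally left implicit in the paper: the mixing condition is really needed for the pair $(\pi^\star,\text{planned policy})$ rather than $(\pi^\star,\pi^{\mathrm{alg}})$, and the helper lemma's indices must be realigned; your horizon concern, by contrast, is already absorbed by the general terminal term $\mathrm{sp}(V^\star_{t+N}-\bar V^\star_{t+N})$ in that lemma.
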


\medskip
\noindent
The proof of this lemma is provided at Appendix~\ref{app_1}. We now present the following result, which constitutes the main contribution.

\begin{lemma}\label{lem:Q_value_diff_skip}
Suppose $t \in \mathcal{T}_{\mathrm{skip}}$ and $\tau_{k(t)}$ is the most recent update time. Then $\Delta_{t}(s_{0})$ can be bounded as:
\begin{align*}
    \Delta_{t}(s_{0}) 
    &\leq G_{\tau_{k(t)}}(s_0) 
    + \alpha^{\lfloor \frac{T-t}{m}\rfloor}\ \tilde{V} 
    + \bar{\delta}_{\tau_{k(t)}, t} 
    + \bar{\varepsilon}_{\tau_{k(t)}, t} \tilde{V}\\
    &\quad +2 \alpha^{\lfloor \frac{T-t}{m}\rfloor}\ 
    \sum_{i=0}^{T-t - \lfloor\frac{T-t}{m}\rfloor m } 
    \bigg( 
    \bar{\varepsilon}_{\tau_{k(t)}, \tau_{k(t)}+\lfloor \frac{T-t}{m}\rfloor m+i}\ \tilde{V} 
    + \bar{\delta}_{\tau_{k(t)}, \tau_{k(t)}+\lfloor \frac{T-t}{m}\rfloor m+i}  
    \bigg)\\
    &\quad + 2 \sum_{\ell =0}^{\lfloor \frac{T-t}{m}\rfloor - 1} \alpha^{\ell }\ 
    \sum_{i=0}^{m-1} 
    \bigg( 
    \bar{\varepsilon}_{\tau_{k(t)}, \tau_{k(t)}+\ell m+i}\ \tilde{V} 
    + \bar{\delta}_{\tau_{k(t)}, \tau_{k(t)}+\ell m+i}  
    \bigg).
\end{align*}
where 
$\bar{\varepsilon}_{\tau_{k(t)},t} := \max_{s,a} \| P_{t}(\cdot\mid s,a) - P_{\tau_{k(t)}}(\cdot\mid s,a)\|_{\mathrm{tv}}$, 
and 
$\bar{\delta}_{\tau_{k(t)},t} := \max_{s} \mathrm{sp} \big( r_{t}(s,\cdot) - r_{\tau_{k(t)}}(s,\cdot) \big)$.
\end{lemma}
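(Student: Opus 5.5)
We bound the skip-time gap $\Delta_t(s_0)$ by comparing it to the gap at the last update time, keeping the structure $\Delta_t = \Delta_{\tau_{k(t)}} + (\text{time mismatch}) + (\text{state mismatch})$ described after Theorem~\ref{th:main}. Write $\tau := \tau_{k(t)}$. At a skip time the algorithm plays $a_t^{\mathrm{alg}} \sim \pi^{\mathrm{alg}}_{\tau}(s_\tau)$, which is the action computed and evaluated at time $\tau$. Inside the expectation defining $\Delta_t(s_0)$, we add and subtract $Q^\star_\tau(s_\tau,a^\star_\tau) - Q^\star_\tau(s_\tau,a^{\mathrm{alg}}_\tau)$ and obtain three groups of terms.

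First, the term $Q^\star_\tau(s_\tau,a^\star_\tau) - Q^\star_\tau(s_\tau,a^{\mathrm{alg}}_\tau)$, taken under the same nested expectation up to time $\tau$. This is exactly the update-time gap $G_{\tau}(s_0)=\Delta_\tau(s_0)$. The extra expectations over $s_{\tau+1},\dots,s_t$ do not act on it, so it passes through unchanged.

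Second, the time mismatch. This is the difference between the gap function $s\mapsto \max_a Q^\star_t(s,a) - Q^\star_t(s,a^{\mathrm{alg}})$ at time $t$ and the same construction at time $\tau$. We write
\[
Q^\star_t(s,a) - Q^\star_\tau(s,a) = \bigl(r_t - r_\tau\bigr)(s,a) + \bigl(P_t - P_\tau\bigr)V^\star_{t+1}(s,a) + P_\tau\bigl(V^\star_{t+1}-V^\star_{\tau+1}\bigr)(s,a).
\]
Because only action differences matter, the reward part contributes $\mathrm{sp}(r_t(s,\cdot)-r_\tau(s,\cdot))\le\bar\delta_{\tau,t}$. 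By the standard span/TV bound from the notation section, the kernel part contributes $\bar\varepsilon_{\tau,t}\tilde V$. The last part needs $\mathrm{sp}(V^\star_{t+1}-V^\star_{\tau+1})$, which we control with Lemma~\ref{lem:multi_stage_error}. We apply it with the comparison model $\bar P_k := P_{k-(t-\tau)}$ and $\bar r_k := r_{k-(t-\tau)}$, the true model shifted back to the last update time. Its per-stage discrepancies are then $\bar\varepsilon_{\tau,\tau+j}$ and $\bar\delta_{\tau,\tau+j}$, and its value spans are bounded by $\tilde V$. Taking $N=T-t$ gives the terminal term $\alpha^{\lfloor (T-t)/m\rfloor}\mathrm{sp}(\cdot)\le \alpha^{\lfloor (T-t)/m\rfloor}\tilde V$. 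It also gives the two weighted sums $2\sum_\ell \alpha^\ell\sum_{i}\bar e$ and $2\alpha^{\lfloor (T-t)/m\rfloor}\sum_i \bar e$ with the indices stated in Lemma~\ref{lem:Q_value_diff_skip}. At the horizon end, the shifted model's value function is truncated, and that mismatch is absorbed into the $\tilde V$ term. Matching the $i=0$ indexing (rather than $i=1$) just reflects the one-step offset between $V_{t+1}$ and $V_{\tau+1}$.

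Third, the state mismatch. The policy is evaluated at $s_\tau$, while the gap is measured at the true state $s_t$. We handle this by replacing $s_t$ with $s_\tau$ inside the gap function and bounding the difference through the mixing property. Assumption~\ref{ass:mixing} ensures the algorithm's state distribution and the optimal-policy distribution overlap by at least $\eta$ per $m$-block. This contraction is what the factor $\alpha^{\lfloor (T-t)/m\rfloor}$ reflects, so we fold this contribution into the terminal $\tilde V$ term and the $\bar\varepsilon\tilde V$ terms rather than tracking it separately.

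The main obstacle is this third step. A crude bound on the state mismatch is only $\tilde V$ per step, with no decay, and that would make the lemma trivial. The plan is therefore to write the mismatch as a difference of expectations of $\max_a Q^\star_t - Q^\star_t(\cdot,a^{\mathrm{alg}})$ under $P_{\tau:t}$-propagated distributions. We then apply a coupling over $m$-blocks from the overlap coefficient to get the geometric factor. The indexing of the leftover block has to be matched carefully to the statement.
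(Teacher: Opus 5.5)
\textbf{Where the proof breaks: the state-mismatch step.} Your first two steps follow the paper's route. The paper also splits off an update-time term, bounds the time mismatch by $\mathrm{sp}(Q^\star_t(s,\cdot)-Q^\star_{\tau}(s,\cdot))$, and controls the value difference with a time-shifted auxiliary MDP and Lemma~\ref{lem:multi_stage_error}. Your indices $V^\star_{t+1}-V^\star_{\tau+1}$ are in fact the correct ones. Subtracting the update-time gap at $s_\tau$ also makes its identification with $\Delta_{\tau}(s_0)$ exact, whereas the paper evaluates its term (I) at $s_t$.

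The genuine gap is your third step. The state-mismatch contribution cannot be folded into $\alpha^{\lfloor (T-t)/m\rfloor}\tilde V$ or the $\bar\varepsilon\tilde V$ terms, by mixing or by any coupling. It is an instantaneous loss on the scale of the action gap $\mathrm{sp}_a Q^\star_t(s_t,\cdot)$, and it is present even when dynamics and rewards never change. Mixing only makes $s_t$ less predictable from $s_\tau$, so it makes the stale action worse, not better. Assumption~\ref{ass:mixing} is used in Lemma~\ref{lem:multi_stage_error} to contract differences of value functions propagated backward over future $m$-blocks. It says nothing about how close $s_t$ is to $s_\tau$ along a single trajectory.

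\textbf{Counterexample.} Take $\mathcal S=\mathcal A=\{1,2\}$ and $\beta=0$. For all $t$, let $P_t(\cdot\mid s,a)$ be uniform, and let $r_t(s,a)=1$ if $a=s$ and $0$ otherwise.
\begin{itemize}
\item Then $V^\star_t\equiv T-t$, so $\tilde V=0$, and every $\bar\varepsilon$ and $\bar\delta$ vanishes.
\item Assumption~\ref{ass:mixing} holds with $m=1$, $\eta=1$.
\item Each $\hat W^\star_{\tau,h}$ is constant in the state whatever $\hat P_\tau$ is, so $\pi^{\mathrm{alg}}_\tau(s)=s$.
\item With the update-time action greedy for the plan, as in the proof of Lemma~\ref{lem:Q_value_diff_obs}, this gives $\Delta_\tau(s_0)=0$.
\end{itemize}
So the right-hand side of the lemma is $0$. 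At any skip time $t>\tau$, however, the agent plays $a_t=s_\tau$ while $s_t$ is uniform and independent of $s_\tau$. Hence $\Delta_t(s_0)=\Pr(s_t\neq s_\tau)=1/2$. The inequality as stated is false, and no refinement of your third step can close it.

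Your remark that the crude bound is ``only $\tilde V$ per step'' is also off. Here $\tilde V=0$, yet the mismatch is $1/2$. The crude bound is a reward-scale action-gap term, and that term is exactly what has to appear in the bound, not something that makes the lemma trivial.

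\textbf{Comparison with the paper.} The paper isolates this same quantity as its term (III). It bounds it by $\mathrm{sp}(r_t(s,\cdot))+\max_{a,a'}\|P_t(\cdot\mid s,a)-P_t(\cdot\mid s,a')\|_{\mathrm{tv}}\tilde V$, which is valid here since it equals $1\ge 1/2$. That term is then absent from the final combined bound, so the paper's proof does not establish the displayed statement either. What your argument can actually prove is the stated bound plus such an additive, non-decaying state-mismatch term. Your step three should bound that term directly rather than through mixing.
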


\medskip
\noindent
The proof of this lemma is provided at Appendix~\ref{app_2}.

\noindent
By combining the results of these two lemmas and summing over the whole time horizon, we obtain:
\begin{align*}
    \mathcal{DR}(T) 
    &\leq  \sum_{t=0}^{T-1} 
    \Big( 
    \alpha^{\big \lfloor \frac{H_{\tau_{k(t)}}-1}{m} \big \rfloor}  \ \tilde{V} 
    + \hat{E}_{\tau_{k(t)}} 
    \Big)
    + \sum_{t \in \mathcal{T}_{\mathrm{skip}}} 
    \Big( 
    \alpha^{\lfloor \frac{T-t}{m}\rfloor}\ \tilde{V} 
    + \bar{E}_{t} 
    \Big)
\end{align*}
where $\hat{E}_{t}$ and $\bar{E}_{t}$ are defined as:
\begin{align*}
    \hat{E}_{t}
    &= \hat{\delta}_{t,0} + \hat{\varepsilon}_{t,0}\ \tilde{V} 
    + 2 \sum_{\ell = 0}^{\lfloor \frac{H_{t}-1}{m} \rfloor - 1} \alpha^{\ell} 
    \sum_{i=1}^{m} \hat{e}_{t,\ell m+i}  + 2 \alpha^{\lfloor \frac{H_{t}-1}{m} \rfloor} 
    \sum_{i=1}^{H_{t}-\lfloor \frac{H_{t}-1}{m} \rfloor m - 1} 
    \hat{e}_{t,\lfloor \frac{H_{t}-1}{m} \rfloor m+i}, \\
    \bar{E}_{t}
    &= \bar{e}_{\tau_{k(t)}, t-\tau_{k(t)}} 
    + 2 \sum_{\ell =0}^{\lfloor \frac{T-t}{m}\rfloor - 1} \alpha^{\ell }\ 
    \sum_{i=0}^{m-1} \bar{e}_{\tau_{k(t)}, \ell m+i}
    + 2 \alpha^{\lfloor \frac{T-t}{m}\rfloor}\ 
    \sum_{i=0}^{T-t - \lfloor\frac{T-t}{m}\rfloor m } 
    \bar{e}_{\tau_{k(t)}, \lfloor \frac{T-t}{m}\rfloor m+i},
\end{align*}
with 
$\hat{e}_{t, j} = \hat{\varepsilon}_{t,j}\ \tilde{W}_{t} + \hat{\delta}_{t,j}$ 
and  
$\bar{e}_{t,j} = \bar{\varepsilon}_{\tau_{k(t)}, j}\ \tilde{V} + \bar{\delta}_{\tau_{k(t)}, j}$. 
This concludes the proof.
\section{Analysis of Regret at Update Steps}\label{app_1}

\begin{proof}[Proof of Lemma~\ref{lem:Q_value_diff_obs}]
We start the proof by introducing an auxiliary finite-horizon TVMDP at any update time $t \in \mathcal{T}_{\mathrm{upd}}$, $\hat{\mathcal{M}}_{t}$ as:
\begin{align*}
    \hat{\mathcal{M}}_{t} = \Big(\mathcal{S}, \mathcal{A}, \hat{P}_{t}, \{r^{(\beta)}_{t,h}\}_{h=0}^{H_{t}-1}, H_{t}\Big)
\end{align*}
with augmented reward  
\[
    r^{(\beta)}_{t,h}(s,a) = r_{t+h}(s,a) + \beta \mathrm{u}_{t+h \mid t}(s,a), \qquad h=0,\dots,H_{t}-1,
\]  
for any $(s,a) \in \mathcal{S} \times \mathcal{A}$ and transition kernel $\hat P_{t}$, which is fixed for the whole horizon $H_{t} = \min\{\bar{H}, T-t\}$.

\medskip
\noindent
We denote its value function by $\hat W^{\star}_{t,h}$ as introduced in~\eqref{eq:w_hat}, which satisfies
\begin{align*}
    \hat{W}^{\star}_{t,h}(s)
    = \max_{a \in \mathcal{A}} \Big \{ r^{(\beta)}_{t,h}(s, a) + \mathbb{E}_{s' \sim \hat{P}_{t}(\cdot \mid s, a)} \big[ \hat{W}^{\star}_{t,h+1}(s') \big] \Big \},\qquad h=0, \cdots, H_{t}-1
\end{align*}
for any $s \in \mathcal{S}$, with terminal condition $\hat{W}^\star_{t,H_{t}} \equiv 0$. 

\medskip
\noindent
In addition, we denote its state--action value function by $\hat Z^{\star}_{t,h}$ as
\begin{align*}
    \hat Z^{\star}_{t,h}(s, a) = r^{(\beta)}_{t,h}(s, a) + \mathbb{E}_{s' \sim \hat{P}_{t}(\cdot \mid s, a)} \big[ \hat{W}^{\star}_{t,h+1}(s') \big],\qquad h=0, \cdots, H_{t}-1,
\end{align*}
for any $(s,a) \in \mathcal{S} \times \mathcal{A}$, where $\hat{W}^{\star}_{t,h}(s)
    = \max_{a \in \mathcal{A}} \hat Z^{\star}_{t,h}(s, a)$.

\medskip
\noindent
Now we upper bound $\Delta_{t}(s_{0})$. For that, rewrite $Q_{t}^{\star}(s_{t},a_{t}^\star) - Q_{t}^{\star}(s_{t},a_{t}^{\mathrm{alg}})$ by adding and subtracting $\hat{Z}_{t,0}^{\star}(s_{t},a_{t}^{\mathrm{alg}})$ and $\hat{Z}_{t,0}^{\star}(s_{t},a_{t}^{\star})$:
\begin{align*}
    Q_{t}^{\star}(s_{t},a_{t}^\star) - Q_{t}^{\star}(s_{t},a_{t}^{\mathrm{alg}})
    &=   Q_{t}^{\star}(s_{t},a_{t}^\star) - \hat{Z}_{t,0}^{\star}(s_{t},a_{t}^{\star})
    + \hat{Z}_{t,0}^{\star}(s_{t},a_{t}^{\mathrm{alg}}) - Q_{t}^{\star}(s_{t},a_{t}^{\mathrm{alg}})\\
    &\quad + \hat{Z}_{t,0}^{\star}(s_{t},a_{t}^{\star})
    - \hat{Z}_{t,0}^{\star}(s_{t},a_{t}^{\mathrm{alg}}).
\end{align*}

\medskip
\noindent
Since $a^{\mathrm{alg}}_{t} = \arg\max_{a} \hat{Z}_{t, 0}^{\star}(s_{t},a)$, we have
\[
\hat{Z}_{t,0}^{\star}(s_{t},a_{t}^{\star})
    - \hat{Z}_{t,0}^{\star}(s_{t},a_{t}^{\mathrm{alg}}) \leq 0,
\]
and therefore
\begin{align*}
    Q_{t}^{\star}(s_{t},a_{t}^\star) - Q_{t}^{\star}(s_{t},a_{t}^{\mathrm{alg}})
    &\leq \max_{a} \Big( Q_{t}^{\star}(s_{t},a) - \hat{Z}_{t,0}^{\star}(s_{t},a) \Big) 
    - \min_{a} \Big( Q_{t}^{\star}(s_{t},a) - \hat{Z}_{t,0}^{\star}(s_{t},a) \Big).
\end{align*}

\medskip
\noindent
Expanding the terms gives
\begin{align*}
    &Q_{t}^{\star}(s_{t},a_{t}^\star) - Q_{t}^{\star}(s_{t},a_{t}^{\mathrm{alg}})\\ 
    &\quad \leq \max_{a\in\mathcal{A}} \bigg( r_{t}(s_{t},a) + \mathbb{E}_{s'\sim P_t(\cdot\mid s_{t},a)} [V_{t+1}^{\star}(s')] - r^{(\beta)}_{t,0}(s_{t},a) - \mathbb{E}_{s'\sim \hat{P}_t(\cdot\mid s_{t},a)} [\hat{W}_{t,1}^{\star}(s')] \bigg)\\
    &\qquad - \min_{a\in\mathcal{A}} \bigg( r_{t}(s_{t},a) + \mathbb{E}_{s'\sim P_t(\cdot\mid s_{t},a)} [V_{t+1}^{\star}(s')] - r^{(\beta)}_{t,0}(s_{t},a) - \mathbb{E}_{s'\sim \hat{P}_t(\cdot\mid s_{t},a)} [\hat{W}_{t,1}^{\star}(s')] \bigg).
\end{align*}

\medskip
\noindent
Using $r_t(s,a) - r^{(\beta)}_{t,0}(s,a) = -\beta \mathrm{u}_{t|t}(s,a)$, we obtain
\begin{align*}
    Q_{t}^{\star}(s_{t},a_{t}^\star) - Q_{t}^{\star}(s_{t},a_{t}^{\mathrm{alg}})
    & = \mathrm{sp} \big( \mathrm{u}_{t \mid t} (s_{t}, \cdot) \big) 
    + \mathrm{sp} \bigg( \mathbb{E}_{s'\sim P_t(\cdot\mid s_{t},\cdot)}[V_{t+1}^{\star}(s')] 
    -  \mathbb{E}_{s'\sim \hat{P}_t(\cdot\mid s_{t},\cdot)}[\hat{W}_{t,1}^{\star}(s')] \bigg).
\end{align*}

\medskip
\noindent
Now, adding and subtracting $\mathbb{E}_{s' \sim \hat P_t (\cdot\mid s_{t},a)}[V_{t+1}^\star(s')]$ and upper bounding uniformly over $s_t$, we obtain
\begin{align*}
    Q_{t}^{\star}(s_{t},a_{t}^\star) - Q_{t}^{\star}(s_{t},a_{t}^{\mathrm{alg}})
    &\leq \max_{s} \operatorname{sp}(\mathrm{u}_{t|t}(s,\cdot)) \\
    &\quad+\mathrm{sp} \bigg( \mathbb{E}_{s'\sim P_t(\cdot\mid s_{t},\cdot)}[V_{t+1}^{\star}(s')] 
    - \mathbb{E}_{s' \sim \hat P_t (\cdot\mid s_{t},\cdot)}[V_{t+1}^\star(s')] \bigg)\\
    &\quad +\mathrm{sp} \bigg(
    \mathbb{E}_{s' \sim \hat P_t (\cdot\mid s_{t},\cdot)}[V_{t+1}^\star(s')]
    - \mathbb{E}_{s' \sim \hat P_t (\cdot\mid s_{t},\cdot)}[\hat{W}_{t,1}^{\star}(s')] \bigg).
\end{align*}

\medskip
\noindent
Using the standard bound that,
\begin{align*}
    \mathrm{sp} \bigg( \mathbb{E}_{s'\sim P_t(\cdot\mid s_{t},\cdot)}[V_{t+1}^{\star}(s')] 
        - \mathbb{E}_{s' \sim \hat P_t (\cdot\mid s_{t},\cdot)}[V_{t+1}^\star(s')] \bigg) \le
    \max_{a\in\mathcal A}\|P_t(\cdot\mid s_t,a)-\hat P_t(\cdot\mid s_t,a)\|_{\mathrm{tv}}\,
    \mathrm{sp}(V_{t+1}^\star),
\end{align*}
we get
\begin{equation}\label{eq:helper}
    \begin{aligned}
        Q_{t}^{\star}(s_{t},a_{t}^\star) - Q_{t}^{\star}(s_{t},a_{t}^{\mathrm{alg}})
        &\leq 
        \max_{s} \operatorname{sp}(\mathrm{u}_{t|t}(s,\cdot)) 
        + \max_{s,a}\| P_t(\cdot\mid s,a) - \hat{P}_t(\cdot\mid s,a) \|_{\mathrm{tv}}\ \mathrm{sp} \big(V_{t+1}^{\star}\big)\\ 
        &\quad + \mathrm{sp} \Big(
         V_{t+1}^{\star}- \hat{W}_{t,1}^{\star} \Big).
    \end{aligned}
\end{equation}

\medskip
\noindent
Now, to upper bound $\mathrm{sp} \big (V^{\star}_{t+1} - \hat{W}^{\star}_{t,1}\big )$, we apply Lemma~\ref{lem:multi_stage_error}.
Let
\begin{align*}
    \hat{\varepsilon}_{t,k} 
    &= \max_{s,a}\| P_{t+k}(\cdot\mid s,a) - \hat{P}_t(\cdot\mid s,a) \|_{\mathrm{tv}},\\
    \hat{\delta}_{t,k} &= \max_{s} \operatorname{sp}(\mathrm{u}_{t+k|t}(s,\cdot)).
\end{align*}

\medskip
\noindent
Then,
\begin{align*}
    \mathrm{sp} \big( V^{\star}_{t+1} - \hat{W}^{\star}_{t,1} \big)
    &\leq  
    \alpha^{\lfloor \frac{H_{t}-1}{m} \rfloor}  \ \mathrm{sp} \Big(  V^{\star}_{t+H_{t}} \Big)\\
    &\quad + 2 \alpha^{\lfloor \frac{H_{t}-1}{m} \rfloor} 
    \sum_{i=1}^{H_{t}-\lfloor \frac{H_{t}-1}{m} \rfloor m - 1} 
    \Big( 
    \hat{\varepsilon}_{t,\lfloor \frac{H_{t}-1}{m} \rfloor m+i}\ 
    \mathrm{sp} \big(  \hat{W}^{\star}_{t,(\lfloor \frac{H_{t}-1}{m} \rfloor )m+1} \big) 
    + \hat{\delta}_{t,\lfloor \frac{H_{t}-1}{m} \rfloor m+i}  
    \Big)\\
    &\quad + 2 \sum_{\ell = 0}^{\lfloor \frac{H_{t}-1}{m} \rfloor - 1} 
    \alpha^{\ell} 
    \sum_{i=1}^{m} 
    \Big( 
    \hat{\varepsilon}_{t,\ell m+i}\ 
    \mathrm{sp} \big(  \hat{W}^{\star}_{t,(\ell+1)m+1} \big) 
    + \hat{\delta}_{t,\ell m+i}  
    \Big),
\end{align*}
where we used that $\hat{W}^{\star}_{t,H_{t}} \equiv 0$.

\medskip
\noindent
Going back to~\eqref{eq:helper}, we conclude that  
\begin{align*}
    &Q_{t}^{\star}(s_{t},a_{t}^\star) - Q_{t}^{\star}(s_{t},a_{t}^{\mathrm{alg}})\\
    &\quad \leq 
    \hat{\delta}_{t,0} + \hat{\varepsilon}_{t,0}\ \mathrm{sp} \big(V_{t+1}^{\star}\big) 
    + \alpha^{\lfloor \frac{H_{t}-1}{m} \rfloor}  \ \mathrm{sp} \big(  V^{\star}_{t+H_{t}} \big)\\
    &\qquad + 2 \alpha^{\lfloor \frac{H_{t}-1}{m} \rfloor} 
    \sum_{i=1}^{H_{t}-\lfloor \frac{H_{t}-1}{m} \rfloor m - 1} 
    \Big( 
    \hat{\varepsilon}_{t,\lfloor \frac{H_{t}-1}{m} \rfloor m+i}\ 
    \mathrm{sp} \big(  \hat{W}^{\star}_{t,(\lfloor \frac{H_{t}-1}{m} \rfloor )m+1} \big) 
    + \hat{\delta}_{t,\lfloor \frac{H_{t}-1}{m} \rfloor m+i}  
    \Big)\\
    &\qquad + 2 \sum_{\ell = 0}^{\lfloor \frac{H_{t}-1}{m} \rfloor - 1} 
    \alpha^{\ell} 
    \sum_{i=1}^{m} 
    \Big( 
    \hat{\varepsilon}_{t,\ell m+i}\ 
    \mathrm{sp} \big(  \hat{W}^{\star}_{t,(\ell+1)m+1} \big) 
    + \hat{\delta}_{t,\ell m+i}  
    \Big).
\end{align*}

\medskip
\noindent
Now since this bound holds for any $s_t \in \mathcal{S}$, we conclude that:
\begin{align*}
    \Delta_{t}(s_{0})
    &\leq 
    \hat{\delta}_{t,0} + \hat{\varepsilon}_{t,0}\ \mathrm{sp} \big(V_{t+1}^{\star}\big) 
    + \alpha^{\lfloor \frac{H_{t}-1}{m} \rfloor}  \ \mathrm{sp} \big(  V^{\star}_{t+H_{t}} \big)\\
    &\quad + 2 \alpha^{\lfloor \frac{H_{t}-1}{m} \rfloor} 
    \sum_{i=1}^{H_{t}-\lfloor \frac{H_{t}-1}{m} \rfloor m - 1} 
    \Big( 
    \hat{\varepsilon}_{t,\lfloor \frac{H_{t}-1}{m} \rfloor m+i}\ 
    \mathrm{sp} \big(  \hat{W}^{\star}_{t,(\lfloor \frac{H_{t}-1}{m} \rfloor )m+1} \big) 
    + \hat{\delta}_{t,\lfloor \frac{H_{t}-1}{m} \rfloor m+i}  
    \Big)\\
    &\quad + 2 \sum_{\ell = 0}^{\lfloor \frac{H_{t}-1}{m} \rfloor - 1} 
    \alpha^{\ell} 
    \sum_{i=1}^{m} 
    \Big( 
    \hat{\varepsilon}_{t,\ell m+i}\ 
    \mathrm{sp} \big(  \hat{W}^{\star}_{t,(\ell+1)m+1} \big) 
    + \hat{\delta}_{t,\ell m+i}  
    \Big).
\end{align*}
\end{proof}

\section{Analysis of Regret During Skip Intervals}\label{app_2}

\begin{proof}[Proof of Lemma~\ref{lem:Q_value_diff_skip}]
Suppose $t \in \mathcal{T}_{\mathrm{skip}}$. At time $t$, Algorithm~\ref{al:main} relies on the most recent update at time $\tau_{k(t)}$. In particular, the agent uses the last updated policy evaluated at the last observed state $s_{\tau_{k(t)}}$ and executes
\begin{align}
    a_{t} \sim \pi^{\mathrm{alg}}_{\tau_{k(t)}}(s_{\tau_{k(t)}}).
\end{align}
This means that at any time $t$, the most recent policy is evaluated at the last observed state $s_{\tau_{k(t)}}$.

\medskip
\noindent
Now let $L_{t}(s_{t})$ be defined as:
\begin{align*}
    L_{t}(s_{t}) = Q^{\star}_{t} \big(s_{t}, \pi^{\star}_{t}(s_{t})\big) - Q^{\star}_{t}\big(s_{t}, \pi^{\mathrm{alg}}_{\tau_{k(t)}}(s_{\tau_{k(t)}}) \big).
\end{align*}

\medskip
We can rewrite $L_{t}(s_{t})$ as:
\begin{align*}
    L_{t}(s_t)
    &= Q^{\star}_{t} \big(s_t, \pi^{\star}_{t}(s_t)\big) - Q^{\star}_{\tau_{k(t)}} \big(s_t, \pi^{\star}_{\tau_{k(t)}}(s_t) \big)\\
    &\quad + Q^{\star}_{\tau_{k(t)}} \big(s_t, \pi^{\star}_{\tau_{k(t)}}(s_t) \big) - Q^{\star}_{\tau_{k(t)}} \big(s_t, \pi^{\mathrm{alg}}_{\tau_{k(t)}}(s_t) \big)\\
    &\quad + Q^{\star}_{\tau_{k(t)}} \big(s_t, \pi^{\mathrm{alg}}_{\tau_{k(t)}}(s_t)\big) - Q^{\star}_{t}\big(s_t, \pi^{\mathrm{alg}}_{\tau_{k(t)}}(s_t) \big)\\
    &\quad + Q^{\star}_{t} \big(s_t, \pi^{\mathrm{alg}}_{\tau_{k(t)}}(s_t)\big) - Q^{\star}_{t}\big(s_t, \pi^{\mathrm{alg}}_{\tau_{k(t)}}(s_{\tau_{k(t)}}) \big)\\
    &= \underbrace{Q^{\star}_{\tau_{k(t)}} \big(s_t, \pi^{\star}_{\tau_{k(t)}}(s_t) \big) - Q^{\star}_{\tau_{k(t)}} \big(s_t, \pi^{\mathrm{alg}}_{\tau_{k(t)}}(s_t) \big)}_{\text{(I)}} \\
    &\quad + \underbrace{\mathrm{sp} \Big( Q^{\star}_{t} (s_t, \cdot) - Q^{\star}_{\tau_{k(t)}} (s_t, \cdot ) \Big)}_{\text{(II)}} \\
    &\quad + \underbrace{Q^{\star}_{t} \big(s_t, \pi^{\mathrm{alg}}_{\tau_{k(t)}}(s_t)\big) - Q^{\star}_{t}\big(s_t, \pi^{\mathrm{alg}}_{\tau_{k(t)}}(s_{\tau_{k(t)}}) \big)}_{\text{(III)}}.
\end{align*}

\medskip
\noindent
For term $(\mathrm{I})$, we have
\begin{align*}
    \mathbb{E}_{s_{1}\sim P_{0}(\cdot \mid s_{0},a_{0})}\ \dots \mathbb{E}_{s_{t}\sim P_{t-1}(\cdot \mid s_{t-1},a_{t-1})} 
    \Big[ Q^{\star}_{\tau_{k(t)}} \big(s_t, \pi^{\star}_{\tau_{k(t)}}(s_t) \big) - Q^{\star}_{\tau_{k(t)}} \big(s_t, \pi^{\mathrm{alg}}_{\tau_{k(t)}}(s_t) \big) \Big] 
    = \Delta_{\tau_{k(t)}}(s_{0}),
\end{align*}
which can be bounded using Lemma~\ref{lem:Q_value_diff_obs}.

\medskip
\noindent
To bound term $(\mathrm{II})$, we write:
\begin{align*}
    \mathrm{sp} \Big( Q^{\star}_{t} (s, \cdot) - Q^{\star}_{\tau_{k(t)}} (s, \cdot ) \Big)
    &\leq \max_{s} \mathrm{sp} \big( r_{t}(s,\cdot) - r_{\tau_{k(t)}}(s,\cdot) \big) + \max_{s,a} \| P_{t}(\cdot\mid s,a) - P_{\tau_{k(t)}}(\cdot\mid s,a)\|_{\mathrm{tv}}\ \mathrm{sp} \big(V^{\star}_{t}\big) \\
    &\quad + \mathrm{sp} \big(V^{\star}_{t} - V^{\star}_{\tau_{k(t)}} \big).
\end{align*}

\medskip
\noindent
Let
\begin{align*}
    \bar{\varepsilon}_{\tau_{k(t)},t} &:= \max_{s,a} \| P_{t}(\cdot\mid s,a) - P_{\tau_{k(t)}}(\cdot\mid s,a)\|_{\mathrm{tv}},\\
    \bar{\delta}_{\tau_{k(t)},t} &:= \max_{s} \mathrm{sp} \big( r_{t}(s,\cdot) - r_{\tau_{k(t)}}(s,\cdot) \big),
\end{align*}
then
\begin{align*}
    \mathrm{sp} \Big( Q^{\star}_{t} (s, \cdot) - Q^{\star}_{\tau_{k(t)}} (s, \cdot ) \Big)
    \leq \bar{\delta}_{\tau_{k(t)}, t} + \bar{\varepsilon}_{\tau_{k(t)}, t} \tilde{V} + \mathrm{sp} \big(V^{\star}_{t} - V^{\star}_{\tau_{k(t)}} \big).
\end{align*}

\medskip
\noindent
To bound $\mathrm{sp} \big(\,V^{\star}_{t}\, - V^{\star}_{\tau_{k(t)}} \big)$, let us consider the auxiliary MDP, $\bar{\mathcal{M}} = \Big(\mathcal{S}, \mathcal{A}, T, \{\bar{P}_{\ell}\}_{\ell=0}^{T-1}, \{\bar{r}_{\ell}\}_{\ell=0}^{T-1} \Big)$ with reward function and probability transitions as:
\begin{align*}
    \bar r_\ell(s,a)
    &=
    \left\{
    \begin{aligned}
        & r_\ell(s,a) && \ell = 0,\dots,\tau_{k(t)}-1, \\[1.5mm]
        & r_{t+j}(s,a) && \ell = \tau_{k(t)} + j,\ j = 0,\dots,T-t-1, \\[1.5mm]
        & 0 && \ell \ge T - t + \tau_{k(t)}
    \end{aligned}
    \right. \\[4mm]
    \bar P_\ell(\cdot \mid s,a)
    &=
    \left\{
    \begin{aligned}
        & P_\ell(\cdot \mid s,a) 
        && \ell = 0,\dots,\tau_{k(t)}-1, \\[1.5mm]
        & P_{t+j}(\cdot \mid s,a) 
        && \ell = \tau_{k(t)} + j,\ j = 0,\dots,T - t-1, \\[1.5mm]
        & P_{T-1} 
        && \ell \ge T - t + \tau_{k(t)}
    \end{aligned}
    \right.
\end{align*}
\medskip
Also, let $\{\bar V_\ell^*\}_{\ell=0}^T$ denote the optimal value functions of $\bar{\mathcal M}$, i.e.
\begin{align*}
    \bar V_T^*(s) &= 0,\\
    \bar V_\ell^*(s) &= \max_{a\in\mathcal A}
    \left\{ \bar r_\ell(s,a) + \mathbb{E}_{s' \sim \bar P_\ell( . \mid s,a)}[\bar V_{\ell+1}^*(s')]
    \right\},  \quad \ell = 0,\dots,T-1.
\end{align*}
for all $s\in\mathcal S$. The following lemma relates the optimal value functions of $\bar{\mathcal{M}}$ to those of the original MDP.

\begin{lemma}[Time-shifted auxiliary MDP]
The optimal value functions of $\bar{\mathcal{M}}$,
$\{\bar V_\ell^*\}_{\ell=0}^T$ satisfy
\begin{align*}
    &\bar V_{\ell}^*(s) = V_{t - \tau_{k(t)} + \ell}^*(s),\qquad \ell = \tau_{k(t)},\dots,T-t+\tau_{k(t)},\\
    &\bar V_\ell^*(s) = 0,\qquad \qquad \ \ \ \ \ \ell = T - t + \tau_{k(t)},\dots,T, 
\end{align*}
for all $s \in \mathcal S$
\end{lemma}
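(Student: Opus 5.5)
We argue by backward induction on $\ell$, from $\ell = T$ down to $\ell = \tau_{k(t)}$, using only the Bellman recursion for $\bar{\mathcal M}$ and for $\mathcal M$. Throughout, write $d := t - \tau_{k(t)} \ge 1$, so the claimed identity reads $\bar V^*_\ell = V^*_{\ell+d}$ on the window $\ell \in \{\tau_{k(t)},\dots,T-d\}$, and $\bar V^*_\ell \equiv 0$ for $\ell \ge T-d$.

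\textbf{Tail.} First we establish the second statement. For $\ell \ge T-d$ we have $\bar r_\ell \equiv 0$ by construction. Start from $\bar V^*_T \equiv 0$ and step down: if $\bar V^*_{\ell+1}\equiv 0$ and $\bar r_\ell \equiv 0$, then
\[
\bar V^*_\ell(s)=\max_a\{0+\mathbb E_{s'\sim \bar P_\ell}[0]\}=0 .
\]
This holds whatever $\bar P_\ell$ is, so the choice $\bar P_\ell = P_{T-1}$ there is irrelevant. Hence $\bar V^*_\ell\equiv 0$ for all $\ell \in\{T-d,\dots,T\}$. In particular $\bar V^*_{T-d}\equiv 0 = V^*_T$, which is consistent with the first statement at its right endpoint and serves as the induction base.

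\textbf{Shifted window.} Suppose $\bar V^*_{\ell+1} = V^*_{\ell+1+d}$ for some $\ell$ with $\tau_{k(t)} \le \ell \le T-d-1$. Write $\ell = \tau_{k(t)}+j$ with $0\le j\le T-t-1$. By definition $\bar r_\ell = r_{t+j} = r_{\ell+d}$ and $\bar P_\ell = P_{\ell+d}$. Then
\[
\bar V^*_\ell(s)=\max_a\bigl\{r_{\ell+d}(s,a)+\mathbb E_{s'\sim P_{\ell+d}(\cdot\mid s,a)}[V^*_{\ell+d+1}(s')]\bigr\}=V^*_{\ell+d}(s),
\]
by the Bellman recursion for $\mathcal M$ at time $\ell+d\le T-1$. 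Induction down to $\ell=\tau_{k(t)}$ gives the first statement.

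\textbf{Main obstacle.} The only delicate step is index bookkeeping at the boundaries. One issue is checking that the case ranges in the definition of $\bar r,\bar P$ (namely $j\le T-t-1$, i.e.\ $\ell\le T-d-1$) align exactly with the induction range. The other is that the two statements overlap consistently at $\ell=T-d$, where they agree because $V^*_T\equiv0$. The definition of $\bar{\mathcal M}$ for $\ell<\tau_{k(t)}$ plays no role and need not be examined.
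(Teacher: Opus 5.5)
Your proof is correct and follows essentially the same route as the paper: first show $\bar V^*_\ell \equiv 0$ for $\ell \ge T-t+\tau_{k(t)}$ because $\bar r_\ell \equiv 0$ there, then run backward induction using $\bar r_\ell = r_{\ell+d}$, $\bar P_\ell = P_{\ell+d}$ and the Bellman recursions of both MDPs, with base case $\bar V^*_{T-d} \equiv 0 = V^*_T$. Writing $d = t-\tau_{k(t)}$ makes your index bookkeeping cleaner than the paper's version, which switches between $\tau$ and $\tau_{k(t)}$.
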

\begin{proof}
Since $\bar V_T^*(s)=0$ for all $s$, backward induction yields
\begin{align*}
    \bar V_\ell^*(s) = \max_{a} \Bigl\{ 0 + \mathbb{E}_{s' \sim \bar P_\ell( . \mid s,a)}[0] \Bigr\} = 0, \qquad  \forall s,\ \forall \ell \ge T-t+\tau_{k(t)}.
\end{align*}
Thus,
\begin{align*}
    \bar V_{T-t+\tau_{k(t)}}^*(s) = 0 = V_T^*(s), \qquad \forall s\in\mathcal S.
\end{align*}
We now prove, by backward induction on $j=0, \dots, T-t$, that
\[
    \bar V_{\tau+j}^*(s) = V_{t+j}^*(s),  \qquad \forall s\in\mathcal S.
\]
For $j=T-t$ we established above:
\[
    \bar V_{T-t+\tau_{k(t)}}^*(s) = 0 =  V_T^*(s).
\]
Assume the induction hypothesis
\[
    \bar V_{\tau_{k(t)}+j}^*(s) = V_{t+j}^*(s),
    \qquad \forall s\in\mathcal S,
\]
for some $j\in\{1,\dots,T-t\}$.
Using the Bellman optimality equations of $\bar{\mathcal M}$ at time $\tau_{k(t)}+j-1$ and of the original MDP at time $t+j-1$, and the fact that
\[
    \bar r_{\tau_{k(t)}+j-1} = r_{t+j-1},
    \qquad
    \bar P_{\tau_{k(t)}+j-1} = P_{t+j-1},
\]
together with the induction hypothesis, we obtain
\begin{align*}
\bar V_{\tau+j-1}^*(s)
&=
\max_{a\in\mathcal A}
\left\{
    \bar r_{\tau_{k(t)}+j-1}(s,a)
    +
    \sum_{s'} \bar P_{\tau+j-1}(s'\mid s,a)\,
        \bar V_{\tau_{k(t)}+j}^*(s')
\right\} \\
&=
\max_{a\in\mathcal A}
\left\{
    r_{t+j-1}(s,a)
    +
    \sum_{s'} P_{t+j-1}(s'\mid s,a)\,
        V_{t+j}^*(s')
\right\} \\
&= V_{t+j-1}^*(s).
\end{align*}
Thus the claim holds for $j-1$, completing the backward induction. This completes the proof.    
\end{proof}

\noindent
Based on the above lemma we can replace $V^{\star}_{t}$ with $\bar{V}^{\star}_{\tau_{k(t)}}$ and thus we have:
\begin{align*}
    \mathrm{sp} \big(\,V^{\star}_{\tau_{k(t)}}\, - V^{\star}_{t} \big) &=  \mathrm{sp} \big(\,V^{\star}_{\tau_{k(t)}}\, - \bar{V}^{\star}_{\tau_{k(t)}} \big).
\end{align*}
Since the TVMDP $\mathcal{M}$ satisfies the Assumption~\ref{ass:mixing}, then the optimal Bellman operator contracts over blocks of length $m$. Using the results of the Lemma~\ref{lem:multi_stage_error} and considering the fact that $\bar{V}^{\star}_{\tau_{k(t)} + T - t} = V^{\star}_{T} \equiv 0$, we conclude that:
\begin{align*}
    \mathrm{sp} \big(V^{\star}_{\tau_{k(t)}} - V^{\star}_{t} \big)
    &\leq \alpha^{\lfloor \frac{T-t}{m}\rfloor}\ \tilde{V} \\
    &\quad +2 \alpha^{\lfloor \frac{T-t}{m}\rfloor}\ \sum_{i=0}^{T-t - \lfloor\frac{T-t}{m}\rfloor m } 
    \bigg( \bar{\varepsilon}_{\tau_{k(t)}, \tau_{k(t)}+\lfloor \frac{T-t}{m}\rfloor m+i}\ \tilde{V} 
    + \bar{\delta}_{\tau_{k(t)}, \tau_{k(t)}+\lfloor \frac{T-t}{m}\rfloor m+i}  \bigg)\\
    &\quad + 2 \sum_{\ell =0}^{\lfloor \frac{T-t}{m}\rfloor - 1} \alpha^{\ell }\ \sum_{i=0}^{m-1} 
    \bigg( \bar{\varepsilon}_{\tau_{k(t)}, \tau_{k(t)}+\ell m+i}\ \tilde{V} 
    + \bar{\delta}_{\tau_{k(t)}, \tau_{k(t)}+\ell m+i}  \bigg).
\end{align*}

\medskip
\noindent
For term $(\mathrm{III})$, we have
\begin{align*}
    Q^{\star}_{t} \big(s, \pi^{\mathrm{alg}}_{\tau_{k(t)}}(s)\big) - Q^{\star}_{t}&(s, \pi^{\mathrm{alg}}_{\tau_{k(t)}}(s_{\tau_{k(t)}}) \big) \leq \mathrm{sp} \big( r_{t} (s, \cdot) \big) + \max_{a,a'} \| P_{t}(\cdot | s, a) -  P_{t}(\cdot | s,a') \|_{\mathrm{tv}}\ \tilde{V}.
\end{align*}

\medskip
\noindent
Combining all bounds yields
\begin{align*}
    G_{t}(s_{0}) 
    &\leq G_{\tau_{k(t)}}(s_0) + \alpha^{\lfloor \frac{T-t}{m}\rfloor}\ \tilde{V} + \bar{\delta}_{\tau_{k(t)}, t} +   \bar{\varepsilon}_{\tau_{k(t)}, t} \tilde{V}\\
    &\quad +2 \alpha^{\lfloor \frac{T-t}{m}\rfloor}\ \sum_{i=0}^{T-t - \lfloor\frac{T-t}{m}\rfloor m } 
    \bigg( \bar{\varepsilon}_{\tau_{k(t)}, \tau_{k(t)}+\lfloor \frac{T-t}{m}\rfloor m+i}\ \tilde{V} 
    + \bar{\delta}_{\tau_{k(t)}, \tau_{k(t)}+\lfloor \frac{T-t}{m}\rfloor m+i}  \bigg)\\
    &\quad + 2 \sum_{\ell =0}^{\lfloor \frac{T-t}{m}\rfloor - 1} \alpha^{\ell }\ \sum_{i=0}^{m-1} 
    \bigg( \bar{\varepsilon}_{\tau_{k(t)}, \tau_{k(t)}+\ell m+i}\ \tilde{V} 
    + \bar{\delta}_{\tau_{k(t)}, \tau_{k(t)}+\ell m+i}  \bigg).
\end{align*}
\end{proof}

\end{document}